\newtheorem{defini}{Definition}
\newtheorem{theorem}{Theorem}
\newtheorem{lemma}[theorem]{Lemma}
\newtheorem{cor}[theorem]{Corollary}
\newtheorem{obs}{Observation}
\def\QED{\ensuremath{{\square}}}
\def\markatright#1{\leavevmode\unskip\nobreak\quad\hspace*{\fill}{#1}}
\newenvironment{proof}
  {\begin{trivlist}\item[\hskip\labelsep{\bf Proof.}]}
  {\markatright{\QED}\end{trivlist}}
\title{A Fast 2-Approximation Algorithm for Guarding Orthogonal Terrains }
\author{Yangdi Lyu\thanks{Dept. of Computer \& Info. Sci. \& Eng.,
        University of Florida, {\tt \{yangdi, ungor\}@cise.ufl.edu}}
        \and
        Alper \"Ung\"or\footnotemark[1]}
\begin{document}
\maketitle

\begin{abstract}
Terrain Guarding Problem(TGP), which is known to be NP-complete, asks to find a smallest set of guard locations on a terrain $T$ such that every point on $T$ is visible by a guard. Here, we study this problem on 1.5D orthogonal terrains where the edges are bound to be horizontal or vertical. We propose a 2-approximation algorithm that runs in O($n \log m$) time, where $n$ and $m$ are the sizes of input and output, respectively. This is an improvement over the previous best algorithm, which is a 2-approximation with O($n^2$) running time.

\end{abstract}

\section{Introduction}
Optimal placement of antennas, cameras, and light sources on terrains is important for communication network, security, and architectural design applications. Even a consideration of the problem on 1.5D terrains is useful whenever the domain is a highway, street, or a hallway. Moreover, this simpler version plays a role on the complexity analysis and algorithm design for the guarding problem on higher dimensional terrains. 

A 1.5D terrain $T$ is an $x$-monotone polygonal chain consists of $n$ vertices $v_i \in \mathbb{R}^2$, for $ i=1, 2, \ldots,n$ and $n-1$ edges $e_{i} = \overline{v_{i}v_{i+1}}$ for $i=1,2,\ldots,n-1$.
$T$ is called an \emph{orthogonal terrain} if all its edges are either horizontal or vertical, and there are no two consecutive horizontal/vertical edges. 
For two vertices $p, q \in T$, we say $p$ is left of $q$, denoted as $p<q$, if $p.x < q.x$. The vertices of $T$ are indexed from left to right, so $v_{i+1} \nless v_i$. For $p, q \in T$, $p$ can \textit{see} $q$ if the line segment $\overline{pq}$ is never strictly below the terrain $T$.

Given a terrain $T$, a guarding candidate set $G\subseteq{T}$ and a witness set $W\subseteq{T}$, terrain guarding problem TGP$(G,W)$ is to find the minimum guarding set $G^{*}\subseteq{G}$ such that each point in $W$ is seen by at least one point in $G^{*}$. For orthogonal terrains, we refer to this problem as OTGP. Here, we focus on solving OTGP$(V(T), V(T))$ where both the guarding candidate set and the witness set are the vertices of the terrain, i.e., $G=W=V(T)$.


\subsection{Related Work}
The terrain guarding problem is closely related to the well known Art Gallery Problem \cite{Rourke} of finding the minimum set of positions to guard a polygon. The first result was obtained by Chv\'atal: $\lfloor\frac{n}{3}\rfloor$ guards are always sufficient and sometimes necessary to guard a polygon of n vertices. Art Gallery Problem was shown to be NP-hard: on simple polygons \cite{Lee}, on simple orthogonal polygons \cite{Schuchardt}, and on monotone polygons \cite{Krohn}. Moreover, it was shown to be APX-hard on simple polygons \cite{Eidenbenz}.

Terrain Guarding Problem for general 1.5D terrains is shown to be NP-hard by a reduction from \textsc{Planar 3sat} \cite{King}. Ben-Moshe \textit{et al}. \cite{Moshe} gave the first $O(1)$-approximation algorithm. Elbassioni \textit{et al}. \cite{Elbassioni} gave an improvement by showing that LP rounding results in a 4-approximation for TGP$(G,W)$ if $G\cap{W}=\emptyset$ (a 5-approximation otherwise). A local search based PTAS is also proposed for TGP  \cite{Friedrichs,Gibson}.

For orthogonal terrains, Katz and Roisman \cite{Katz} gave a 2-approximation algorithm that runs in O($n^{2}$) time, by computing a minimum clique cover in chordal graphs. Recently, Durocher \textit{et al}. \cite{Durocher} studied the orthogonal terrain guarding problem under \emph{directed visibility} where two vertices $u, v$ are considered to see each other only if the interior of the segment $uv$ is strictly above the terrain.
Under this restricted definition, no reflex vertex of the input terrain $T$ can see convex vertices both on its left and right side. This property simplifies the problem, and leads to a linear time greedy exact algorithm. Under standard visibility, Durocher \textit{et al}. \cite{Durocher} also observed that the hardness result for TGP in \cite{King} does not apply for orthogonal terrains, leaving the complexity of OTGP open.

\section{Preliminaries}
We assume that the input terrain begins and ends with vertical edges, an assumption for technical convenience to be removed in the appendix.

$V(T)$ is split into two disjoint subsets as reflex vertices $V_{r}(T)$ and convex vertices $V_{c}(T)$. Walking along the orthogonal terrain $T$ from left to right, a vertex $v$ is \textit{convex(reflex)} if we turn left(right) at $v$. Each subset is further split into two subsets depending on whether a vertex is on the left or on the right side of its incident horizontal edge. Specifically, walking along $T$ from left to right, a vertex $v$ is \textit{left(right)} if we walk from a vertical(horizontal) edge to a horizontal(vertical) edge at $v$. So, $V(T)$ is split into four disjoint subsets: left reflex vertices $V_{lr}(T)$, right reflex vertices $V_{rr}(T)$, left convex vertices $V_{lc}(T)$, and right convex vertices $V_{rc}(T)$, see Figure~\ref{figdefinition}. The first and the last vertices of $T$ can also be labelled simply by considering dummy horizontal edges incident to them. 

For each $v\in{V_c(T)}$, \textit{upper vertex} of $v$, $U(v)\in{V_r(T)}$ is the reflex vertex that shares a common vertical edge with $v$, see Figure~\ref{figdefinition}. As $T$ begins and ends with vertical edges, $U(v)$ for each convex vertex $v$ is well defined.

For each $v\in{V_{lc}(T)}$, \textit{right horizon} of $v$, $R(v)\in{V_r(T)}$ is the rightmost reflex vertex that can see $v$, see Figure~\ref{figdefinition}. This definition is similar to that of $R(v)$ by Durocher \textit{et al}. \cite{Durocher} except that a left convex vertex cannot be seen by right reflex vertices under directed visibility but it can be seen by them under standard visibility. 

\begin{figure}[h]
\centering
\includegraphics[width =.9\linewidth]{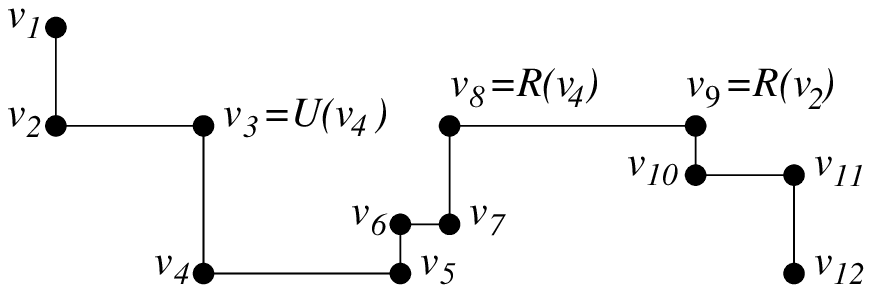}
\caption{$V_{lc}(T)$ = $\lbrace{v_{2}, v_{4}, v_{10}, v_{12}}\rbrace$, $V_{rc}(T)$ = $\lbrace{v_{5}, v_{7}}\rbrace$, $V_{lr}(T)$ = $\lbrace{v_{6}, v_{8}}\rbrace$, $V_{rr}(T)$ = $\lbrace{v_{1}, v_{3}, v_{9}, v_{11}}\rbrace$}
\label{figdefinition}
\end{figure}

Following definition by L\"offler \textit{et al.}~will also be used. 

\begin{defini}\emph{\cite{Loffler}}
Given a reflex vertex $p_i$ and a vertex $v_k \in T$, the ray with origin $p_i$ and vector $\overrightarrow{p_iv_k}$ is called a \emph{shadow ray} if: (i) $p_i$ sees $v_k$; (ii) $p_i$ does not see the points of $T$ immediately to the left of $v_k$.
\end{defini}

For each shadow ray $\overrightarrow{p_iv_k}$, $v_k$ is called the \emph{obstacle} of $p_i$, $obs(p_i)$. By definition, there may be multiple shadow rays for each vertex $p$, corresponding to different obstacles. The sweepline algorithm presented in the next Section relies on the following definition to identify a unique shadow ray (and its obstacle). The shadow ray of $p$ with respect to the sweep line at event $w$, $sr_w(p)$, is defined as the highest shadow ray of $p$ whose obstacle is to the right of the sweep line at event $w$, see Figure~\ref{figshadowray}. In the following sections, a shadow ray of $p$ refers to the shadow ray of $p$ with respect to the current sweep line. In our algorithm, lower envelope of shadow rays is maintained to extract some essential visibility information efficiently.
\begin{figure}[h]
\centering
\includegraphics[width = 0.7\linewidth]{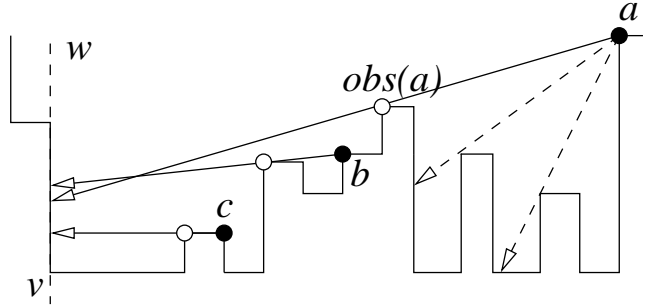}
\caption{The shadow rays of $a$, $b$ and $c$ with respect to sweep line $w$. Obstacles are denoted by empty circles.}
\label{figshadowray}
\end{figure}

\subsection{Properties of Orthogonal Terrains}
The following claim called \textit{the order claim} was proved by Ben-Moshe \textit{et al}. \cite{Moshe}, and holds in 1.5D general terrains.
\begin{lemma}\emph{\cite{Moshe}}
Let $p < q < r < s$ be four points on terrain $T$. If $p$ sees $r$, and $q$ sees $s$, then $p$ sees $s$.
\label{lemmaOrder}
\end{lemma}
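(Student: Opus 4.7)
The plan is to verify that the segment $\overline{ps}$ stays on or above the terrain $T$ on the whole interval $[p.x,s.x]$. Since $q.x\leq r.x$, this interval is covered by $[p.x,r.x]\cup[q.x,s.x]$; on the first piece $\overline{pr}\geq T$ by hypothesis, and on the second $\overline{qs}\geq T$. So it suffices to show that $\overline{ps}$ dominates $\overline{pr}$ on $[p.x,r.x]$ and dominates $\overline{qs}$ on $[q.x,s.x]$.

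Because the three segments are linear and $\overline{ps}$ shares the endpoint $p$ with $\overline{pr}$ and the endpoint $s$ with $\overline{qs}$, both dominations reduce to the single chain of slopes
\[
\mathrm{slope}(\overline{pr}) \;\leq\; \mathrm{slope}(\overline{ps}) \;\leq\; \mathrm{slope}(\overline{qs}).
\]
Two preliminary slope inequalities are immediate from the hypotheses. At $x=q.x$, $\overline{pr}$ lies at or above $T$ and $q\in T$, so $q$ lies on or below the line through $p$ and $r$, equivalently $\mathrm{slope}(\overline{pr})\leq\mathrm{slope}(\overline{qr})$. By the same reasoning at $x=r.x$, $r$ lies on or below the line through $q$ and $s$, giving $\mathrm{slope}(\overline{qr})\leq\mathrm{slope}(\overline{qs})$. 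Composing, $\mathrm{slope}(\overline{pr})\leq\mathrm{slope}(\overline{qs})$.

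To obtain the middle link $\mathrm{slope}(\overline{pr})\leq\mathrm{slope}(\overline{ps})$, the key step is to check that $s$ lies on or above the line through $p$ and $r$ extended to $x=s.x$. Writing $s.y=q.y+(s.x-q.x)\,\mathrm{slope}(\overline{qs})$ and substituting both $r.y=p.y+(r.x-p.x)\,\mathrm{slope}(\overline{pr})$ and the rearranged bound $q.y\geq r.y-(r.x-q.x)\,\mathrm{slope}(\overline{qs})$ reduces the required gap $s.y-[p.y+(s.x-p.x)\,\mathrm{slope}(\overline{pr})]$ to $(s.x-r.x)(\mathrm{slope}(\overline{qs})-\mathrm{slope}(\overline{pr}))$, which is nonnegative by the previous paragraph and $s.x>r.x$. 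A symmetric manipulation establishes the right link $\mathrm{slope}(\overline{ps})\leq\mathrm{slope}(\overline{qs})$ by showing that $p$ lies on or above the line $\overline{qs}$ extended leftward to $x=p.x$.

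The only difficulty is bookkeeping in the algebra: both visibility hypotheses are needed simultaneously (either alone is not enough to place $\overline{ps}$ above $T$), and the direction of each inequality must be tracked against the signs of the $x$-gaps, all of which are positive by $p<q<r<s$. Once the slope chain holds, pointwise domination of $\overline{ps}$ over $\overline{pr}$ on $[p.x,r.x]$ and over $\overline{qs}$ on $[q.x,s.x]$ follows immediately, and the two subintervals cover $[p.x,s.x]$, finishing the proof.
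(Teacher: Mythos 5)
The paper never proves this lemma: it is the ``order claim'' imported verbatim from Ben-Moshe et al.\ \cite{Moshe}, stated with a citation and no argument, so there is no in-paper proof to compare against. Your proof is correct and self-contained, and it is essentially the standard argument for this claim. The covering $[p.x,s.x]=[p.x,r.x]\cup[q.x,s.x]$, the reduction of the two dominations to the slope chain $\mathrm{slope}(\overline{pr})\leq\mathrm{slope}(\overline{ps})\leq\mathrm{slope}(\overline{qs})$, and the two auxiliary facts ($q$ on or below the line through $p,r$; $r$ on or below the line through $q,s$) are all sound, and the algebra for both links checks out; the only imprecision is that your displayed ``gap'' is a lower bound rather than an identity, so ``reduces to'' should read ``is bounded below by.'' For comparison, the usual packaging in the literature is geometric rather than algebraic: since $q$ lies on or below $\overline{pr}$ and $r$ lies on or below $\overline{qs}$, the two segments cross at a point $c$ with $q.x\leq c.x\leq r.x$; the path $p\to c\to s$ lies nowhere below $T$ and its slopes increase at $c$, so it is convex and the chord $\overline{ps}$ lies on or above it. That is exactly your slope inequality read geometrically, so the content is the same; your version just trades the intersection point for explicit coordinates. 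One technicality worth a sentence in the orthogonal setting of this paper: at a vertical edge $T$ is not the graph of a function, so to justify ``$\overline{pr}$ lies at or above $T$ at $x=q.x$'' one should note that a segment nowhere strictly below $T$ must, at every $x$-coordinate strictly interior to its span, lie at or above the \emph{top} of the terrain at that $x$-coordinate (otherwise it would be strictly below $T$ immediately to one side); since $q.x$ and $r.x$ are interior to the relevant spans, your two auxiliary inequalities follow.
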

The following claims were proved by Katz and Roisman \cite{Katz} for orthogonal terrains. 
\begin{lemma}\emph{\cite{Katz}}
Let $T$ be an orthogonal terrain, $v\in{V_{lc}(T)}$ and another point $p$ on $T$ can see $v$, then $p \nless v$.
\label{lemmaConvex}
\end{lemma}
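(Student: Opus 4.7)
The plan is to prove the contrapositive in a local way: examine the terrain immediately to the left of $v$ and show that any segment ending at $v$ from the strict left must dip below the terrain there. Suppose, for contradiction, that some $p$ with $p < v$ sees $v$.

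First I would nail down the local geometry around a left convex vertex $v$. By definition, the edge of $T$ just to the left of $v$ is vertical and the edge just to the right is horizontal, with $v$ turning left going left-to-right. This forces the left neighbor of $v$ to lie directly above $v$, and that neighbor is precisely $U(v)$, i.e., the vertical edge $\overline{v\,U(v)}$ sits on the line $x = v.x$ at heights between $v.y$ and $U(v).y$. Moreover $U(v)$ is reflex, so the edge of $T$ just to the left of $U(v)$ must be horizontal at height $U(v).y$ (otherwise we would have two consecutive vertical edges, contradicting the terrain's orthogonality assumption). Consequently, there is some $\varepsilon > 0$ such that for every $x \in (v.x - \varepsilon, v.x)$ the terrain $T$ passes through the point $(x, U(v).y)$.

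Now consider the segment $\overline{pv}$. Since $p.x < v.x$, the segment is non-degenerate and as $x \to v.x^-$ along the segment the height tends to $v.y$. Because $v.y < U(v).y$, there exists $\delta \in (0, \varepsilon)$ such that for every $x \in (v.x - \delta, v.x)$ the segment $\overline{pv}$ lies at height strictly less than $U(v).y$. At those $x$-values, however, the terrain is at height exactly $U(v).y$, so the segment is strictly below $T$ on an entire subinterval. This contradicts $p$ seeing $v$, completing the argument.

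The only subtlety worth double-checking is that $T$ really does lie at height $U(v).y$ immediately to the left of $U(v)$; this relies on the standing assumption that no two consecutive edges are both vertical (or both horizontal), together with $U(v)$ being reflex, which together force the left incident edge of $U(v)$ to be a nondegenerate horizontal segment. Given that structural fact, the rest of the argument is a one-line continuity observation, so I do not expect any serious obstacle.
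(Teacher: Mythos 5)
Your proof is correct. Note, however, that the paper does not prove this lemma at all: it is quoted directly from Katz and Roisman \cite{Katz}, so there is no in-paper argument to compare against, and what you have produced is a self-contained elementary proof of an imported fact. Your route is the natural local one: pin down the structure at a left convex vertex ($v$ sits at the bottom of a vertical edge with $U(v)$ directly above it, and the edge left-incident to $U(v)$ is horizontal at height $U(v).y$), then observe by continuity that any segment arriving at $v$ from the strict left must lie strictly below that horizontal edge just before reaching $v$. Two small remarks, neither affecting correctness. First, the horizontality of the edge left of $U(v)$ follows from the ``no two consecutive vertical edges'' convention alone; the reflexness of $U(v)$ is a consequence of the same local picture rather than a needed hypothesis, so your closing ``subtlety'' is slightly overstated. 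Second, your argument implicitly assumes that $U(v)$ has a left-incident edge at all; this is justified, and worth stating, because the assumed existence of a terrain point $p$ with $p < v$ forces the $x$-monotone connected chain to extend strictly left of $x = v.x$ (and when $U(v)$ is the leftmost vertex of $T$, the lemma is vacuous since no such $p$ exists).
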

\begin{lemma}\emph{\cite{Katz}}
If a set $G$ of points on orthogonal terrain $T$ guards a subset $V'\subseteq{V_{c}(T)}$, then there exists a subset $G'\subseteq{V_{r}(T)}$, such that $G'$ guards $V'$ and ${\vert{G'}\vert}\leqslant{\vert{G}\vert}$.
\label{lemmaReflexSub}
\end{lemma}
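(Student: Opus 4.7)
The plan is to construct $G'$ by replacing each $g\in G$ with a reflex vertex $\phi(g)\in V_r(T)$ that sees every convex vertex in $V'$ visible from $g$; then $G'=\{\phi(g):g\in G\}$ is a reflex guarding set of $V'$ with $|G'|\le|G|$. The replacement is defined by case analysis on the location of $g$. If $g\in V_r(T)$, set $\phi(g)=g$. If $g\in V_c(T)$, set $\phi(g)=U(g)$: by Lemma~\ref{lemmaConvex} and its left--right mirror, every point $p$ visible from $g$ lies on the side of $g$'s vertical edge opposite to the horizontal edge at $g$, so the segment $U(g)p$ is weakly above $gp$ on their common $x$-range (higher at $g.x$, equal at $p$) and hence above the terrain. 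The same ``raising'' trick handles $g$ in the interior of a vertical edge by moving $g$ up to the reflex top endpoint of that edge.

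The substantive case is when $g$ lies in the interior of a horizontal edge $h=[a,c]$. Lemma~\ref{lemmaConvex} confines the set $V'_g$ of convex vertices of $V'$ visible from $g$: those in $V_{lc}$ lie at or to the left of $g$, those in $V_{rc}$ at or to the right, and each has $y$-coordinate at least $h.y$ (otherwise the ledge $h$ itself would block $g$'s view). When $V'_g$ lies entirely on one side of $g$, sliding $g$ along $h$ toward the opposite endpoint preserves visibility of every target in $V'_g$---a short comparison of the two line segments, which share their distant endpoint, shows the translated segment stays above the terrain---and then raising a left-convex endpoint to its upper reflex vertex gives $\phi(g)\in\{a,U(a),c,U(c)\}$.

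The main obstacle is the two-sided subcase, where $g$ simultaneously sees some $v_l<a$ in $V_{lc}$ and some $v_r>c$ in $V_{rc}$. Here I plan to apply Lemma~\ref{lemmaOrder} to the quadruple $v_l<g-\varepsilon<g+\varepsilon<v_r$: the sliding calculation shows that $g+\varepsilon$ still sees $v_l$ (rightward sliding preserves left-side visibility) and $g-\varepsilon$ still sees $v_r$ (leftward sliding preserves right-side visibility), so Lemma~\ref{lemmaOrder} gives that $v_l$ sees $v_r$. Iterating across all left--right pairs in $V'_g$ establishes ``cross-visibility'' between the two sides. From this I can select a single convex vertex $v\in V'_g$ (for instance the rightmost left-convex vertex in $V'_g$) and take $\phi(g)=U(v)$, using the raising argument for the same side of $v$ together with the cross-visibility to dominate the opposite side. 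The delicate step is choosing $v$ so that $U(v)$ sees same-side vertices both above and below $v$ (which may require a further order-claim application) and handling the degenerate cases where $V'_g$ contains $a$ or $c$; this is the core technical work.
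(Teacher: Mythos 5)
The paper itself offers no proof of this lemma: it is imported verbatim from Katz and Roisman \cite{Katz}. So your argument can only be judged on its own terms, and on those terms it has exactly the gap you flag yourself. The scaffolding is sound and the easy cases are correct: vertical ``raising'' handles $g\in V_c(T)$ and $g$ interior to a vertical edge (note the raising argument needs no appeal to Lemma~\ref{lemmaConvex} at all, since lifting an endpoint vertically raises the whole segment regardless of which side the target is on); sliding along $h$ away from the targets is computed correctly; and the application of Lemma~\ref{lemmaOrder} to $v_l<g-\varepsilon<g+\varepsilon<v_r$ does yield that $v_l$ sees $v_r$. But in the two-sided case you never establish that $U(v)$ sees the \emph{other} same-side targets of $g$, and you explicitly defer this as ``the core technical work.'' As submitted, the proof is therefore incomplete at its hardest point.

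The gap closes easily, and your worry about same-side vertices ``above and below $v$'' concerns an empty set. Push your own ledge argument one step further: a left convex vertex $v$ has a horizontal edge immediately to its right at height $v.y$, so \emph{any} point seeing $v$ must lie at height at least $v.y$ (a segment from $v$ to a lower point on its right drops strictly below that ledge immediately). Applied to the viewer $g$, this gives $v.y\leqslant h.y$; combined with your inequality $v.y\geqslant h.y$, every convex target of $g$ lies at height \emph{exactly} $h.y$, and all sightlines from $g$ are horizontal. Now suppose $g$ had two left-convex targets $v'<v$. The terrain immediately to the left of $v$ is the horizontal edge ending at $U(v)$, which lies at height $U(v).y>h.y$; the horizontal segment $\overline{gv'}$ passes strictly below it, contradicting visibility. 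Hence $g$ sees at most one left-convex and, symmetrically, at most one right-convex vertex, so $V'_g$ has at most two elements, and $\phi(g)=U(v_l)$ finishes by your cross-visibility plus raising argument (with the one-sided and endpoint cases as you already handle them). With this observation inserted, your replacement construction is a complete and rather elementary proof of the lemma, arguably more self-contained than citing the machinery of \cite{Katz}.
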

\begin{lemma}\emph{\cite{Katz}}
If $G\subseteq{V(T)}$ guards all the convex vertices of an orthogonal terrain $T$ (i.e., $G$ guards the set $V_{c}(T)$), then $G$ guards all the vertices of $T$.
\label{lemmaReflexEnough}
\end{lemma}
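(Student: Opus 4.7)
The plan is to exhibit, for every reflex vertex $r$, a convex vertex $v$ whose visibility forces the visibility of $r$, and to argue that any guard seeing $v$ must also see $r$.

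First I would establish a correspondence between reflex and convex vertices via the map $U$. Because the horizontal and vertical edges of $T$ alternate, every vertex has exactly one vertical edge incident to it. For a reflex vertex $r$, that vertical edge must descend from $r$ (since at a reflex vertex the terrain forms a peak), and the convex vertex $v$ at its lower endpoint satisfies $U(v)=r$. Conversely, every convex vertex has $U(v)$ well-defined by assumption. Hence $V_r(T) = \{U(v) : v \in V_c(T)\}$, so to guard all reflex vertices it suffices to see, for each convex $v$, some guard that also sees $U(v)$.

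The core step is the claim: \emph{if $g \in V(T)$ sees $v \in V_c(T)$, then $g$ sees $U(v)$.} Assume $v$ is left convex (the right convex case follows by the left/right mirror-symmetry of the setup, which mirrors Lemma~\ref{lemmaConvex} into the statement ``if $p$ sees $v\in V_{rc}(T)$ then $p\not> v$''). By Lemma~\ref{lemmaConvex}, $g.x \geq v.x$. The case $g.x=v.x$ is immediate since then $g\in\{v,U(v)\}$ as these are the only vertices on the shared vertical edge. Otherwise $g.x > v.x$, and a direct linear-interpolation computation on the strip $[v.x,g.x]$ shows
\begin{equation*}
y_{\overline{g\,U(v)}}(x) - y_{\overline{g\,v}}(x) \;=\; (U(v).y - v.y)\,\frac{g.x - x}{g.x - v.x} \;\geq\; 0,
\end{equation*}
because $U(v)$ lies directly above $v$. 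Thus $\overline{g\,U(v)}$ lies weakly above $\overline{g\,v}$ at every x-coordinate in the segment's range. Since $\overline{g\,v}$ is weakly above $T$ (as $g$ sees $v$), so is $\overline{g\,U(v)}$, giving the claim.

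Putting the pieces together: if $G$ guards $V_c(T)$, then for each reflex vertex $r$ we can write $r = U(v)$ for some $v \in V_c(T)$, pick $g \in G$ seeing $v$, and conclude by the claim that $g$ sees $r$ as well. Since $V(T) = V_c(T) \cup V_r(T)$, the set $G$ guards all of $V(T)$. The only delicate point is the mirror statement of Lemma~\ref{lemmaConvex} for right convex vertices; this should be noted explicitly, but it is a direct consequence of reflecting the terrain along the vertical axis so no new argument is required.
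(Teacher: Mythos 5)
Your proof is correct, but there is no in-paper argument to compare it against: the paper states this lemma purely as a citation to Katz and Roisman \cite{Katz} and never proves it, so your write-up is a self-contained replacement for that citation. The two ingredients both check out: (i) in an orthogonal terrain the top endpoint of every vertical edge is reflex and the bottom endpoint is convex, so $U$ maps $V_c(T)$ onto $V_r(T)$ and every reflex vertex can be written as $U(v)$; (ii) the interpolation inequality shows $\overline{g\,U(v)}$ lies weakly above $\overline{g\,v}$ at every common $x$-coordinate, and since the region on or above $T$ is upward closed at each abscissa, visibility of $v$ lifts to visibility of $U(v)$; together these give that any $G$ guarding $V_c(T)$ also guards $V_r(T) $, hence all of $V(T)$. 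One simplification you could make: the appeal to Lemma~\ref{lemmaConvex}, the left/right mirror discussion, and the separate case $g.x = v.x$ are all unnecessary. Writing the two segments as $(1-t)g + tv$ and $(1-t)g + tU(v)$ for $t \in [0,1]$, corresponding points share the same $x$-coordinate (because $v.x = U(v).x$) and differ in $y$ by $t\,\bigl(U(v).y - v.y\bigr) \geq 0$, regardless of whether $g$ lies left or right of $v$; so the lifting claim holds verbatim for every $g \in T$ and for both convex types at once, which shortens the proof and removes the only place where you had to invoke an auxiliary lemma.
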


\section{Approximation Algorithm}
Given an orthogonal terrain $T$, our algorithm computes a subset of $V(T)$ that can guard all vertices of $T$, and we prove that the output of our algorithm is at most twice the size of the optimal solution for OTGP$(V(T), V(T))$.

By Lemmas~\ref{lemmaReflexSub} and~\ref{lemmaReflexEnough}, our problem can be reduced to OTGP$(V_{r}(T), V_{c}(T))$ \cite{Katz}. Let $G^*\subseteq{V_{r}(T)}$ be an optimal solution for OTGP$(V_{r}(T), V_{c}(T))$, $G^*$ can guard all convex vertices. So, of course, $G^*$ can guard all left convex vertices, i.e., $G^*$ has at least the same size as the optimal solution for OTGP$(V_{r}(T), V_{lc}(T))$. The same is true for $V_{rc}(T)$, the right convex vertices. 

Our algorithm first computes the optimal solutions for OTGP$(V_{r}(T), V_{lc}(T))$ and OTGP$(V_{r}(T), V_{rc}(T))$, then take the union of these two sets. Our solution can guard all convex vertices, and has the size at most twice as $G^*$, which means it is a 2-approximation.

In the following sections, we will present a sweep line algorithm that computes the optimal solution for OTGP$(V_{r}(T), V_{lc}(T))$. The right convex vertices part is symmetric. 

\subsection{Data Structures}
Our algorithm sweeps the terrain from right to left and put each left convex vertex $u$ into an associated list of a unique reflex vertex $v$, called $L(v)$. When the algorithm terminates, the set of all vertices with non-empty associated lists forms the solution, with each reflex vertex responsible to guard all left convex vertices in its associated list. In addition to the associated lists, following data structures are used:

(1) A modified stack $\mathcal{MS}$ to store a set of all reflex vertices each with a non-empty associated list and can potentially guard more left convex vertices beyond the sweep line. In addition to the standard stack operations (Top, Pop, Push), this modified data structure also supports deletion from any place in the stack given a pointer to that element. Along with each vertex in $\mathcal{MS}$, we also dynamically maintain its obstacle which defines the unique shadow ray with respect to the current sweep line.

(2) A heap, $\mathcal{H}$, to maintain the interior intersections of shadow rays of vertices adjacent in $\mathcal{MS}$.

(3) An event queue $\mathcal{EQ}$ consists of two components, a list $\mathcal{EQ}_T$ to keep all vertices of $T$, and a pointer $\mathcal{EQ}_I$ for $\mathcal{H}$. Next event is the rightmost vertex/intersection from $\mathcal{EQ}_T$ and $\mathcal{EQ}_I$. After handling an event, we delete it from the corresponding component of the queue.

(4) A standard stack, $\mathcal{UHS}$, to store the upper hull used for computing right horizons $R(v)$. 

For each vertex $v$ in $\mathcal{MS}$, we keep two pointers for the shadow ray intersections with its two neighbors. Pointers corresponding to missing neighbors/intersections are set to null. Symmetrically, for each intersection in $\mathcal{H}$, we use two pointers to reach the origins of the corresponding shadow rays in $\mathcal{MS}$.

\subsection{Computing Right Horizons}
To compute $R(v)$, the rightmost vertex visible from a left convex vertex $v$, we use the sweep line algorithm for computing the upper hull of a point set.

\begin{lemma}
Let $v$ be a left convex vertex. If $v$ is the rightmost vertex on terrain $T$, $R(v) = U(v)$. Otherwise, $R(v)$ is the vertex right next to $v$ on the upper hull of all vertices to the right of $v$ together with $v$. 
\end{lemma}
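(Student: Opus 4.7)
The plan is to split on the two cases of the lemma, both ultimately resting on the slope characterization of the right-neighbor on an upper hull (the upper hull neighbor of the leftmost point $v$ is the $u$ that maximizes the slope of $\overline{vu}$).

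For Case 1, assume $v$ is the rightmost vertex of $T$. By Lemma~\ref{lemmaConvex}, every point seeing the left convex vertex $v$ satisfies $p \nless v$, so has $x$-coordinate at least $v.x$. Since $v$ is rightmost and $T$ ends with a vertical edge incident to $v$, the only vertices with that $x$-coordinate are $v$ and $U(v)$, of which only $U(v)$ is reflex; it sees $v$ trivially, so $R(v) = U(v)$.

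For Case 2, let $P = \{v\} \cup \{u \in V(T) : u.x > v.x\}$ and let $w$ be the right-neighbor of $v$ on the upper hull of $P$, i.e., the $u \in P\setminus\{v\}$ maximizing the slope of $\overline{vu}$. I would establish three sub-claims. (i) $w$ is reflex: if instead $w$ were convex, then by orthogonality of $T$ its vertical incident edge would go up to a vertex $w''$ with $w''.x = w.x$ and $w''.y > w.y$; then $w'' \in P$ has strictly larger slope from $v$, contradicting maximality. (ii) $v$ sees $w$: the upper hull property puts $\overline{vw}$ weakly above every vertex of $P$ in the $x$-range $(v.x, w.x)$; because each horizontal edge of $T$ lies at the common $y$-value of its two vertex endpoints, and each vertical edge has its top endpoint as a vertex (the one at the top being a reflex vertex), $\overline{vw}$ is then weakly above all of $T$ in this $x$-range, giving visibility. (iii) $w$ is the rightmost reflex vertex seeing $v$: if some reflex $w'$ with $w'.x > w.x$ saw $v$, then $\overline{vw'}$ would lie weakly above $w$, forcing the slope from $v$ to $w'$ to be at least that to $w$; combined with maximality, equality holds and $v,w,w'$ are collinear, contradicting $w$ being the far endpoint of the first upper hull edge from $v$.

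The main subtlety I expect is the collinear-ties case in (iii): if several vertices of $P$ lie on the first upper hull edge from $v$, the term "right next to $v$ on the upper hull" must unambiguously denote the farthest such collinear vertex (the other endpoint of the first hull edge), rather than the nearest. Under this convention the argument in (iii) goes through, and the lemma is tight even when collinearities occur; the orthogonality argument in (i) is otherwise local and routine, and (ii) follows from the observation that orthogonal terrains have their "silhouette" entirely determined by their vertices.
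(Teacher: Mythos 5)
Your proof is correct and follows essentially the same route as the paper's: identify the upper-hull neighbor of $v$, show it sees $v$ via the hull property, and show that no vertex to its right can see $v$ because the connecting segment would have to pass weakly above that neighbor. You are in fact more careful than the paper on two points its terse proof glosses over --- verifying that the hull neighbor is actually reflex (needed, since $R(v)$ is by definition a reflex vertex) and handling collinear ties on the first hull edge --- both genuine, if minor, gaps in the paper's own argument.
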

\begin{proof}
If $v$ is the rightmost vertex, it is easy to see that $U(v)$ is the rightmost reflex vertex that can see $v$, i.e., $R(v) = U(v)$.
Otherwise, $v$ is always on the upper hull of the considered vertices since it is the leftmost one. There must be some vertex to the right of $v$ on the upper hull, because the rightmost vertex is always on the upper hull. Let $p$ be the vertex next to $v$ on the upper hull, so $\overline{vp}$ is nowhere below the terrain, i.e., $p$ can see $v$. For any vertex $q$ to the right of $p$, as the property of upper hull, we have $p$ higher than $\overline{qv}$, which means $q$ cannot see $v$. So $R(v) = p$.
\end{proof}

With the upper hull of the swept vertices maintained in $\mathcal{UHS}$, $R(v)$ of a vertex $v$ on the sweep line can be found in constant time. Since $T$ is x-monotone, $\mathcal{UHS}$ can be maintained in linear time.

\subsection{Sweep Line Algorithm}
Our algorithm which sweeps the terrain from right to left is depicted below. Handling of each event consists of updates on the relevant data structures, described below after Observation \ref{limitedvisibility} which motivates the first step in handling a right reflex vertex event.
\begin{algorithm}[h]
\caption{\textsc{Terrain\textendash Sweeping}}
\label{Sweeping}
\begin{algorithmic}[1]
\State Initialize $\mathcal{H}$, $\mathcal{MS}$ and all $L(v)$ to be empty
\State Initialize $\mathcal{EQ}$ using $T$ and $\mathcal{H}$
\While{$\mathcal{EQ}_T\neq \emptyset$}
\State Let $v$ be next event in $\mathcal{E}$
\If{$v\in V(T)$}
\State Update $\mathcal{UHS}$ \label{opupdate}
\State Handle the vertex $v$ \label{ophandle}
\Else
\State Handle the intersection $v$
\EndIf
\EndWhile
\State Return $\lbrace{g\mid L(g)\neq\emptyset}\rbrace$
\end{algorithmic}
\label{algosweep}
\end{algorithm}
\begin{obs}
A right reflex vertex can see at least one left convex vertex which is right below it, and at most two left convex vertices.
\label{limitedvisibility}
\end{obs}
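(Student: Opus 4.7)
The plan is to split the observation into its two claims --- ``at least one left convex vertex directly below $p$'' and ``at most two visible left convex vertices in total'' --- and handle them in that order.

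For the lower bound, let $p\in V_{rr}(T)$ and follow the vertical edge descending from $p$ to its other endpoint $u^\star$. At $u^\star$ we walk out of a vertical edge into a horizontal edge (so $u^\star$ is a left vertex) and turn left (so $u^\star$ is convex), giving $u^\star\in V_{lc}(T)$. By construction $u^\star=U^{-1}(p)$, and $p$ sees $u^\star$ since they share the vertical edge. This $u^\star$ lies directly below $p$, which handles the first claim.

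For the upper bound, I would first invoke Lemma~\ref{lemmaConvex} to restrict any visible left convex vertex $u$ to $u.x\leq p.x$, with equality forcing $u=U^{-1}(p)$ (the only left convex vertex sharing $p$'s $x$-coordinate). This leaves the case $u<p$ with $u\in V_{lc}(T)$, which I would pin down via a local visibility argument near the horizontal edges incident to $p$ and $u$. Specifically, just to the left of $p$ the terrain runs horizontally at height $p.y$ (the horizontal edge of the right reflex vertex $p$), and just to the right of $u$ the terrain runs horizontally at height $u.y$ (the horizontal edge of the left convex vertex $u$). For $\overline{pu}$ not to dip strictly below the terrain in either of those two neighborhoods, its slope must be simultaneously non-positive (from $p$'s side) and non-negative (from $u$'s side), which forces $u.y=p.y$.

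Assuming two such candidates $u_1<u_2<p$ both existed, horizontally aligned with $p$, the segment $\overline{pu_1}$ would be horizontal at height $p.y$. But $u_2$ being left convex forces the vertical edge above $u_2$ to climb to $U(u_2)$ with $U(u_2).y>p.y$, and the terrain continues horizontally to the left of $U(u_2)$ at that strictly greater height. Hence at $x=u_2.x-\varepsilon$ for small $\varepsilon>0$, $\overline{pu_1}$ sits at $y=p.y$, strictly below the terrain at height $U(u_2).y$, contradicting visibility. Therefore at most one left convex vertex strictly left of $p$ is visible, which together with $U^{-1}(p)$ gives the bound of two.

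The main (though modest) obstacle is the slope argument: it relies on the relevant horizontal edges actually bordering $p$ and $u$, which is ensured by their $V_{rr}/V_{lc}$ classifications and the assumption of no two consecutive horizontal edges. Once that structural fact is noted, the inequalities and the final blocking contradiction follow from elementary geometry.
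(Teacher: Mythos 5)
The paper states Observation~\ref{limitedvisibility} with no proof at all---it is offered as a self-evident fact motivating step 4(i) of the sweep---so there is no official argument to compare against; what matters is whether your proof stands on its own, and it does. The lower bound is exactly right: the vertical edge descending from a right reflex vertex $p$ ends at a left convex vertex directly below $p$, visible along that shared edge. For the upper bound, your use of Lemma~\ref{lemmaConvex} to confine visible left convex vertices to $u.x\leq p.x$, the observation that $x$-monotonicity leaves only $U^{-1}(p)$ at equality, and the local slope argument (terrain horizontal at height $p.y$ just left of $p$ forces slope $\leq 0$; terrain horizontal at height $u.y$ just right of $u$ forces slope $\geq 0$, hence $u.y=p.y$) are all sound, and both rely correctly on the structural facts that a right reflex vertex has its horizontal edge on its left and a left convex vertex has its horizontal edge on its right. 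The final blocking step is also correct: if $u_1<u_2<p$ were both visible and horizontally aligned with $p$, then just left of $u_2$ the terrain sits at height $U(u_2).y>p.y$ (the horizontal edge entering $U(u_2)$ exists because edges alternate), so the horizontal segment $\overline{u_1p}$ dips strictly below the terrain there. In short, you have written down a complete elementary proof of a statement the authors deemed too obvious to argue; the only stylistic remark is that the equality case $u=U^{-1}(p)$ and the strict case could be merged, since the blocking argument alone already caps the count at one vertex per height level to the left plus the one below.
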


\begin{enumerate}
	\item{Left convex vertex $v$}:
	
		(i) Repeatedly Pop($\mathcal{MS}$), until Top($\mathcal{MS}$) can see $v$ or Top($\mathcal{MS}$) is to the right of $R(v)$.
		
		(ii) If Top($\mathcal{MS}$) sees $v$, add $v$ to $L$(Top($\mathcal{MS}$)). Otherwise, Push $R(v)$ to $\mathcal{MS}$, add $v$ to $L(R(v))$, and set $obs(R(v))$ be $v$, see Figure~\ref{figdummyshadowray}. $\overrightarrow{R(v)v}$ is called a \emph{dummy shadow ray}.	
	
	\item{Right convex vertex $v$}: the only update is to $\mathcal{UHS}$ (in Line~\ref{opupdate}), so nothing to be done in Line~\ref{ophandle}.
	
	\item{Left reflex vertex $v$}: 
	
		(i) Repeatedly Pop($\mathcal{MS}$) until Top($\mathcal{MS}$) cannot see $v$. Push back the last popped vertex that can see $v$, and update its obstacle to be $v$, see Figure~\ref{figleftreflex}. 
		
		(ii) Whenever deleting a vertex from $\mathcal{MS}$, remove its corresponding intersections from $\mathcal{H}$. For the vertex that is pushed to $\mathcal{MS}$, insert the shadow ray intersection with its neighbor to $\mathcal{H}$ and set the corresponding pointers.

	\item{Right reflex vertex $v$}: 
	
		(i) Let $u$=Top($\mathcal{MS}$). Iteratively Pop($\mathcal{MS}$) if Top($\mathcal{MS}$) is lower than $v$. If $u$ is lower than $v$ and there is only one vertex $p$ in $L(u)$, delete $p$ from $L(u)$, add $p$ to $L(v)$, and push $v$ to $\mathcal{MS}$, see Figure~\ref{figrightreflex}. To correctly compute the intersections introduced by the new vertex $v$ in $\mathcal{MS}$, we set $obs(v)$ one step ahead to be the vertex who shares the same horizontal edge with $v$.

		(ii) Delete all vertices in $\mathcal{MS}$ that can see $v$ except for the rightmost one.
		 
		(iii) Update intersections in $\mathcal{H}$ as in 3(ii).
		
		\item{Intersection $v$}: 
		
		(i) If intersection $v$ is above terrain $T$, delete all vertices from $\mathcal{MS}$, whose shadow rays are incident in $v$, except for the rightmost one, see Figure~\ref{figintersection}.
		
		(ii) Update the intersections and pointers as in 3(ii)
\end{enumerate}

\begin{figure}[h]
	\centering
	\begin{subfigure}[b]{0.49\linewidth}
		\includegraphics[width=\linewidth]{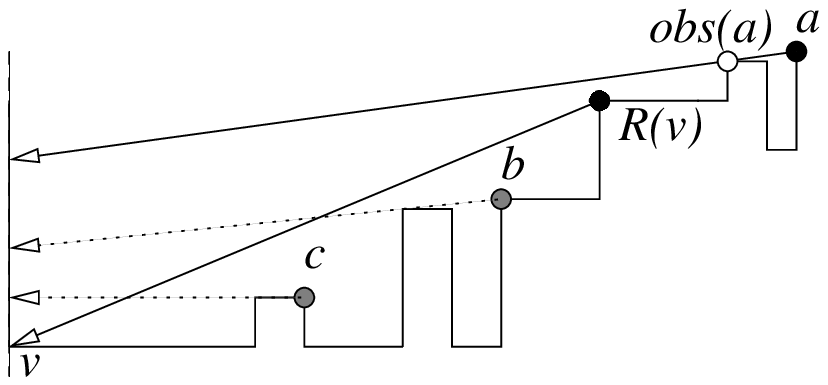}
		\caption{}
		\label{figdummyshadowray}
	\end{subfigure}
	\hfill
	\begin{subfigure}[b]{0.49\linewidth}
		\includegraphics[width=\linewidth]{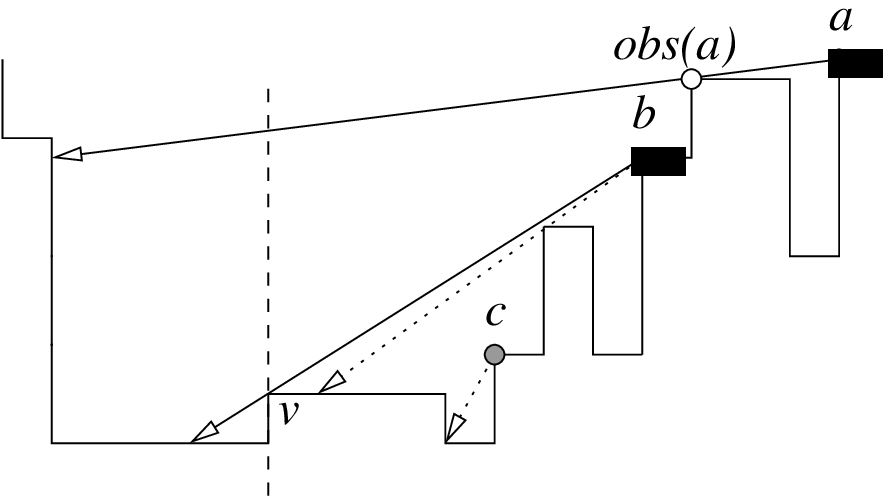}
		\caption{}
		\label{figleftreflex}
	\end{subfigure}
	\hfill
	\begin{subfigure}[b]{0.49\linewidth}
		\includegraphics[width=\linewidth]{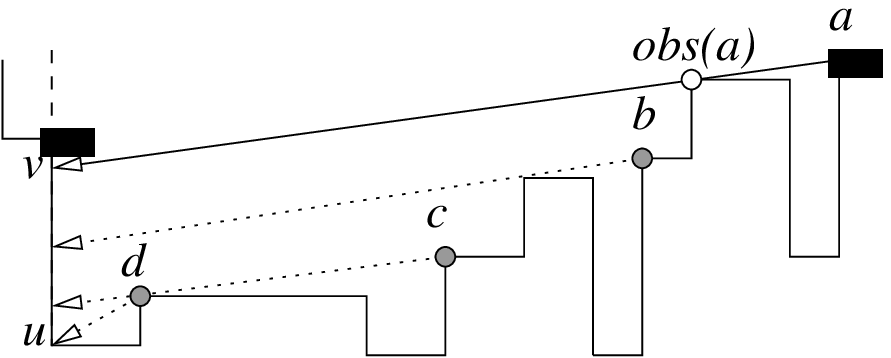}
		\caption{}
		\label{figrightreflex}
	\end{subfigure}
		\hfill
	\begin{subfigure}[b]{0.49\linewidth}
		\includegraphics[width=\linewidth]{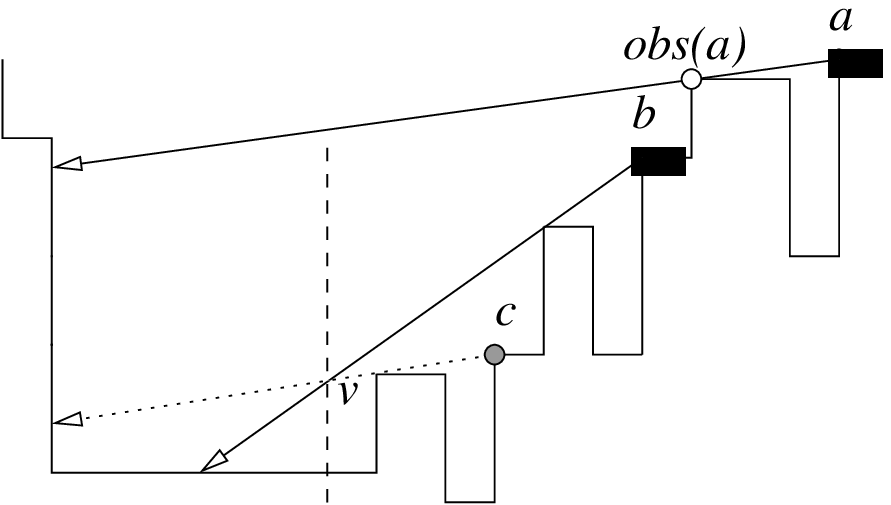}
		\caption{}
		\label{figintersection}
	\end{subfigure}
	\caption{(a) $v \in V_{lc}(T)$: remove reflex vertices from $\mathcal{MS}$ that are to the left of $R(v)$ and cannot see $v$, and add dummy shadow ray. (b) $v \in V_{lr}(T)$: remove all vertices from $\mathcal{MS}$ that can see $v$ except the rightmost one. (c) $v \in V_{rr}(T)$: delete all vertices that are lower than $v$. If $L(d)$ contains only one vertex, push $v$. (d) Intersection $v$: delete all vertices whose shadow rays are incident in $v$ except the rightmost one. }
\end{figure}

\subsection{Correctness}
We say a stack satisfies \emph{left to right order} if the vertices in the stack from top to bottom are ordered from left to right on the terrain. We say a stack satisfies \emph{lower to higher order} if the vertices in the stack from top to bottom are ordered from lower to higher on the terrain. If the stack satisfies both left to right order and lower to higher order, we say the stack is \emph{in order}.

\begin{lemma}
$\mathcal{MS}$ is always in order throughout Algorithm~\ref{algosweep}. The slope of each shadow ray is never negative, i.e., for each vertex $u$ in $\mathcal{MS}$, $obs(u)$ is never higher than $u$.
\end{lemma}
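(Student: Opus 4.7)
My plan is to argue by induction on the number of events processed by Algorithm~\ref{algosweep}. The base case is immediate, since $\mathcal{MS}$ is initially empty and therefore satisfies both ordering properties and the obstacle property vacuously. For the inductive step, I would assume the invariant holds after $k$ events and check that each of the five event handlers preserves it.

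For the left-to-right order from top to bottom, the key observation is that every push places at the top of $\mathcal{MS}$ a vertex whose x-coordinate does not exceed that of the element immediately below. In the left convex case, pushing happens only after the inner loop has popped until $\text{Top}(\mathcal{MS})$ lies to the right of $R(v)$, so placing $R(v)$ on top is consistent. In the left reflex case, the vertex re-pushed is the same one that was just popped, so its position relative to the element below is unchanged. In the right reflex case the newly pushed vertex is $v$ itself; since the sweep proceeds right to left, $v$ lies at or to the left of every surviving stack element. The intersection event only performs deletions, so left-to-right order is preserved trivially.

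The lower-to-higher order and the obstacle-not-higher property have to be handled together, because each handler both modifies the stack and sets some obstacles. I would argue, case by case, that every handler first pops (or deletes) every element whose height would violate the order---the right reflex handler pops every Top lower than $v$ explicitly, while the left reflex and intersection handlers achieve the same effect through visibility together with Lemma~\ref{lemmaOrder}---and that every new obstacle assigned lies no higher than the reflex vertex receiving it. The latter follows from the particular choices of obstacles: the dummy obstacle $v$ in the left convex case lies at or below $R(v)$ because Lemma~\ref{lemmaConvex} forces any vertex seeing $v \in V_{lc}(T)$ to be not to the left of $v$, and orthogonality plus the horizontal edge incident to $v$ forces such a vertex to be not below $v$; the obstacle updated in the left reflex case is the left reflex vertex $v$ itself, which is below the surviving top by the inductive hypothesis together with the popping criterion; and the one-step-ahead obstacle in the right reflex case is the vertex sharing a horizontal edge with $v$, hence at the same height as $v$.

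The step I expect to be the main obstacle is the intersection event. Two shadow rays in the lower envelope cross above $T$ precisely when the underlying reflex vertices sit in a specific configuration, and after deleting all but the rightmost incident ray the surviving element must still respect both the lower-to-higher order relative to whatever remains below it in $\mathcal{MS}$ and the non-negative-slope invariant for its new obstacle. I would address this by arguing that at the intersection point the rightmost incident ray corresponds to the lowest reflex vertex among those deleted, so keeping that one simultaneously preserves both properties; this in turn rests on the inductive hypothesis that $\mathcal{MS}$ was in order at the start of the event, so that the relative stack order and the shadow-ray slopes at the intersection's x-coordinate are consistent with each other.
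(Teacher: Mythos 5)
Your skeleton (induction over the events, one case per handler) matches the paper's, and several of your case arguments are sound: the left-to-right order analysis, the dummy-obstacle slope via Lemma~\ref{lemmaConvex} plus orthogonality, and the horizontal shadow ray introduced at a right reflex event. But there is a genuine gap at the one case that carries the whole lemma: the left convex event in which no stack element sees $v$ and $R(v)$ is pushed. Your blanket claim that ``every handler first pops (or deletes) every element whose height would violate the order'' is false for this handler: its popping criterion is ``Top sees $v$, or Top is to the right of $R(v)$,'' which says nothing about heights. After the pops, the surviving elements are to the right of $R(v)$ and cannot see $v$, and nothing in the handler itself forces them to be no lower than $R(v)$ --- which is exactly what lower-to-higher order requires once $R(v)$ sits on top. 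The paper closes this with a dedicated contradiction argument: if a surviving $u$ were lower than $R(v)$, the terrain walk from $R(v)$ rightward to $u$ must descend past some right reflex vertex $w$ higher than $u$; since $w$ blocks $u$ from every left convex vertex between $R(v)$ and $w$, $u$ must have entered $\mathcal{MS}$ while the sweep line was still to the right of $w$, but then the right reflex handler at $w$ would already have deleted $u$ (it is lower than $w$), a contradiction. Some argument of this kind, coupling the current stack content with what earlier events must have done, is unavoidable here, and your proposal has no substitute for it.

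Two smaller points. First, you single out the intersection event as the expected main difficulty, but it is in fact the trivial case: that handler only deletes vertices (the survivor's obstacle is unchanged), and deletions can never break either ordering; moreover, since ``in order'' means top-to-bottom goes left-to-right and lower-to-higher, the rightmost incident vertex that is kept is the \emph{highest} among those incident, not the lowest as you state. Second, your justification in the left reflex case (that the new obstacle $v$ is no higher than the surviving vertex $p$) appeals to ``the inductive hypothesis together with the popping criterion,'' but neither of these constrains heights; the correct reason is geometric and exactly parallel to your left convex argument: the horizontal edge immediately to the right of a left reflex vertex $v$ lies at $v$'s height, so any vertex that sees $v$ from the right must be no lower than $v$.
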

\begin{proof}
(By induction.) Initially, $\mathcal{MS}$ is empty. So the base case is trivial. Suppose before sweeping to event $v$, $\mathcal{MS}$ is in order and $obs(u)$ is no higher than $u$ for each $u$ in $\mathcal{MS}$.

(1) If $v$ is a left convex vertex, there are two cases. (i) If there exists any vertex in $\mathcal{MS}$ that can see $v$, we only pop vertices from $\mathcal{MS}$, so it is still in order. (ii) If no vertex can see $v$, all vertices to the left of $R(v)$ are deleted, and $R(v)$ is pushed into $\mathcal{MS}$. So, the left to right order is maintained. Next, we need to prove that all remaining vertices in $\mathcal{MS}$ are no lower than $R(v)$. Suppose there exists such vertex $u$ in $\mathcal{MS}$ that is lower than $R(v)$. Then a walk from $R(v)$ to $u$ on the terrain must go down a right reflex vertex $w$ that is higher than $u$. It is easy to see that $u$ cannot see any left convex vertex between $R(v)$ and $w$, so it must be pushed before sweeping to $w$. However, when the sweep line arrives at $w$, $u$ is deleted from $\mathcal{MS}$ as it is lower than $w$ as we will prove shortly. It is a contradiction. So all the other vertices are higher than $R(v)$. Also it is easy to see that the slope of dummy shadow ray $\overrightarrow{R(v)v}$ is positive.
	
(2) If $v$ is a right convex vertex, the only operation is updating the upper hull, $\mathcal{MS}$ remains the same.

(3) If $v$ is a left reflex vertex, we delete some vertices from $\mathcal{MS}$ and update the obstacle of a vertex $p$ to be $v$. As $p$ can see $v$ and $v$ is a left reflex vertex, $v$ is no lower than $p$.
	
(4) If $v$ is a right reflex vertex, as $\mathcal{MS}$ is in order by induction, step 4(i) ensures all the vertices that are lower than $v$ are deleted. Then if we push $v$ back to $\mathcal{MS}$, it is in order. Our newly introduced shadow ray is horizontal and the remaining operations are deletions.
	
(5) If $v$ is an intersection, we only delete some vertices from $\mathcal{MS}$.

Other than these events, $\mathcal{MS}$ will not change. So we can conclude that $\mathcal{MS}$ is always in order and the slope of each shadow ray is never negative.
\end{proof}

As a result of this lemma along with the definition of shadow ray, we can see that the obstacles can only be left reflex vertices except for the dummy shadow rays.

\begin{lemma}
For each vertex $v$ in $\mathcal{MS}$, $v$ and $obs(v)$ correctly define $sr_w(v)$ where $w$ is the current event. Shadow rays of vertices in $\mathcal{MS}$ have no pairwise interior intersections to the right of $w$, and are ordered from lower to higher corresponding to the order of their origins in $\mathcal{MS}$, with the lowest shadow ray corresponding to Top($\mathcal{MS}$).
\end{lemma}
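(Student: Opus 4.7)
The plan is an induction on the sequence of events processed by Algorithm~\ref{algosweep}, mirroring the case analysis of the preceding lemma. The base case is vacuous since $\mathcal{MS}$ starts empty. For the inductive step I assume all three claims hold just before event $w$ and then examine each of the five event types to show they continue to hold afterwards.

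For a left convex vertex that merely pops, the properties are inherited directly. When instead $R(v)$ is pushed with dummy ray $\overrightarrow{R(v)v}$, the previous lemma guarantees that the surviving stack entries lie above and to the right of $R(v)$ with rays of non-negative slope, so the dummy ray slots in as the new lowest ray, consistent with $R(v)$ being on top. A right convex vertex changes nothing. For a left reflex vertex $v$, the kept top element $p$ takes obstacle $v$; I would argue the pair $(p,v)$ defines the correct $sr_w(p)$ by noting that any higher shadow ray of $p$ would require an obstacle strictly to the right of $v$, but the previous ray for $p$ together with such a ray would have produced an adjacent-pair intersection event processed earlier in the sweep. For a right reflex vertex only deletions act on existing rays, plus (possibly) a near-horizontal ray for the newly pushed $v$ that sits below all survivors, so the ordering is preserved.

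The substantive case is the intersection event $v$: by the inductive ordering, the two rays incident at $v$ originate at adjacent entries of $\mathcal{MS}$, the lower one on top. To the right of $v$ the ordering is consistent, but to the left of $v$ the roles flip, so the non-rightmost origin's ray would cease to be the lowest ray of its origin that is still valid as an $sr_w$. I would invoke Lemma~\ref{lemmaOrder} to show that the deleted origins either inherit an appropriate successor structure or no longer need tracking, and that the single newly adjacent pair produced after the deletions has its intersection inserted into $\mathcal{H}$ by step~5(ii), restoring the invariant that $\mathcal{H}$ holds every adjacent-pair intersection.

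The main obstacle I anticipate is the non-crossing property itself. It decomposes into two subclaims: first, that every adjacent-pair intersection to the right of the sweep line is in $\mathcal{H}$ at all times, and second, that no two non-adjacent rays can first cross before the intervening adjacent pairs do, a statement that follows from the ordered stack structure but must be re-verified each time an adjacency is formed (cases 1, 3, 4, 5). Handling this cleanly will require a careful accounting of the pointer updates in step~3(ii), used in each of the four cases, and verifying that whenever the stack changes exactly the newly adjacent pairs are inserted into $\mathcal{H}$, with obsolete intersections removed along with their vertices.
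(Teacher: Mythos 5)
Your overall plan --- induction over the five event types, resting on the stack-order lemma --- is exactly the paper's structure, and your treatment of the left convex, right convex, and right reflex cases is in line with the paper's proof. The genuine problem is your left reflex case. You justify that $(p,v)$ defines the correct $sr_w(p)$ by claiming that ``any higher shadow ray of $p$ would require an obstacle strictly to the right of $v$,'' and then rule such a ray out via an earlier adjacent-pair intersection event. Both halves fail. First, the geometry is backwards: for two shadow rays of the same origin $p$, if obstacle $o_1$ lies to the right of obstacle $o_2$, then $o_1$ lies on the terrain between $p$ and $o_2$, so $p$ seeing $o_2$ forces $o_1$ to be on or below $\overline{p\,o_2}$; i.e., the ray through the \emph{farther-left} obstacle is the \emph{higher} one. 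Hence a higher shadow ray of $p$ has its obstacle strictly to the left of $v$ --- therefore left of the sweep line and ineligible by the very definition of $sr_w$ --- while obstacles to the right of $v$ give strictly lower rays. That direct observation is the whole argument; it needs no event-queue machinery. Second, the machinery you invoke does not exist: two shadow rays of the same origin intersect only at $p$, never in their interiors, and $\mathcal{H}$ only ever stores interior intersections between shadow rays of \emph{distinct}, adjacent entries of $\mathcal{MS}$, so no intersection event of the kind you describe would ever have been scheduled or processed.

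A smaller misstep: in the intersection case you lean on Lemma~\ref{lemmaOrder}. That lemma (the order claim) concerns visibility among four terrain points and is what the paper uses for the domination argument in Lemma~\ref{correctlemma}; it plays no role in the present, purely geometric, invariant. What the intersection case actually needs is what you yourself flag as the ``main obstacle'': since the induction hypothesis forbids crossings of \emph{any} pair of rays to the right of the sweep line, the rightmost pairwise crossing is necessarily between rays adjacent in the vertical order (equivalently, adjacent in $\mathcal{MS}$), so the heap's rightmost entry is the genuine next event; deleting all but the rightmost incident vertex makes that crossing disappear, and any intersection of the newly adjacent pair lies further left. With your case 3 repaired as above and Lemma~\ref{lemmaOrder} dropped from case 5, your proof coincides with the paper's.
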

\begin{proof}
(By induction.) Initially, $\mathcal{MS}$ is empty, hence the base case is trivial.
Suppose before dealing with event $w$ the claim holds. 

(1) $w$ is a left convex vertex: The shadow rays remain the same if the lowest shadow ray can see $w$. Otherwise, the vertices lower than $R(w)$ are deleted, and $R(w)$ is pushed into $\mathcal{MS}$ with $obs(R(w))=w$. Let $u$ be the vertex next to Top($\mathcal{MS}$). By definition, $sr_w(u)$ should be no lower than $R(w)$. $u$ cannot see $w$ as it is to the right of $R(w)$, i.e., $sr_w(u)$ is higher than $w$. So, $sr_w(u)$ is higher than $\overrightarrow{R(w)w}$ and no interior intersection is introduced to the right of $w$. The lemma holds.

(2) $w$ is a right convex vertex: The shadow rays remain the same.

(3) $w$ is a left reflex vertex: It is the only place we may need to update obstacles to keep the shadow rays correct. As the shadow rays are in order from lower to higher, all the vertices that can see $w$ are near the top of $\mathcal{MS}$ and are consecutive. So, our algorithm correctly finds all shadow rays that need to be updated. We delete all of them except the highest shadow ray which correspond to the rightmost vertex $v$ in $\mathcal{MS}$ that is visible from $w$, then update $sr_w(v)$. Similar to the arguments in case (1), $sr_w(v)$ is lower than the shadow rays of all vertices in $\mathcal{MS}$.

(4) $w$ is a right reflex vertex: The only place to push a vertex to $\mathcal{MS}$ is the first step and it can only push $w$. Suppose $w$ is pushed into $\mathcal{MS}$. In the second step, if $w$ is higher than the shadow ray of $p$ to its right in $\mathcal{MS}$, we will delete $w$ from $\mathcal{MS}$. Otherwise the shadow ray of $w$ is also lower than the shadow rays of all the other vertices in $\mathcal{MS}$.

(5) $w$ is an intersection: Under the induction assumption, the rightmost intersection appears between shadow rays of adjacent vertices in $\mathcal{MS}$. The way we maintain $\mathcal{H}$ ensures $w$ as the rightmost intersection. All shadow rays incident in $w$ are deleted except one, so $w$ disappears. 
\end{proof}

We say a point $p\in{V_{r}(T)}$ dominates point $q\in{V_{r}(T)}$, if $p$ can see every point $v\in{V_{lc}(T)}$ to the left of the sweep line that is visible by $q$.
\begin{lemma}
All vertices deleted from $\mathcal{MS}$ are either dominated by some vertex in $\mathcal{MS}$ at the end of current iteration, or cannot see any left convex vertex to the left of current sweep line.
\label{correctlemma}
\end{lemma}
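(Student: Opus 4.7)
My plan is to proceed by induction on the sequence of sweep events, with a case analysis on the event type that triggered each deletion from $\mathcal{MS}$. Right convex events cause no deletions, so four cases remain. For each deleted $u$, I aim to exhibit a surviving vertex $w\in\mathcal{MS}$ that dominates $u$, or else argue that $u$ cannot see any left convex vertex strictly to the left of the current sweep line. The main tool is the order claim (Lemma~\ref{lemmaOrder}), typically applied to a quadruple $v'<e<u<w$ in increasing $x$-order, where $v'$ is an arbitrary left convex vertex strictly left of the sweep line hypothesized to be seen by $u$, $e$ is the current event vertex, and $w$ is the candidate dominator. The two visibility hypotheses feeding the order claim are ``$v'$ sees $u$'' (the negation of the desired conclusion) and ``$e$ sees $w$,'' arranged by how $w$ is chosen in each case.

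For the left convex event $v$ (case 1), each popped $u$ satisfies $v<u\le R(v)$ and does not see $v$. I set $w$ to be the new Top if it sees $v$, and $w=R(v)$ (freshly pushed) otherwise; in both subcases $u<w$ strictly, and the order claim on $v'<v<u<w$ closes the case. For the left reflex event $v$ (case 3), the last popped vertex $w$ is re-pushed as the rightmost vertex of $\mathcal{MS}$ that sees $v$, and every other deleted $p$ in the batch is to the left of $w$ and also sees $v$, so the same quadruple $v'<v<p<w$ applies. The intersection event (case 5) fits the same template after identifying $w$ with the rightmost surviving origin of the shadow rays incident at the event; the common above-terrain intersection point of the incident shadow rays, combined with the correctness of shadow-ray maintenance from the preceding lemma, supplies the visibility segment needed to invoke the order claim.

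The main obstacle will be the right reflex event $v$ (case 4). Deletions in step 4(ii) mirror case 3 after $v$ is placed. The subtle deletions are those in step 4(i), where every popped $u$ satisfies $u.x>v.x$ and $u.y<v.y$. The geometric key is that the vertical edge incident to $v$ reaches height $v.y$ exactly at $x=v.x$, so for any $q$ with $q.x<v.x$ and $q.y\le v.y$, the segment $\overline{uq}$ evaluated at $x=v.x$ has height $(1-t)u.y+t\,q.y<v.y$ (with $t=(v.x-u.x)/(q.x-u.x)\in(0,1)$), and is blocked by the terrain at $v$. Thus any left convex vertex $v'<v$ that a popped $u$ could possibly see must satisfy $v'.y>v.y$. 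For this residual possibility I would take the dominator $w^{*}$ to be the rightmost vertex of $\mathcal{MS}$ that sees $v$, which must lie above $v$ and hence, by the lower-to-higher order of $\mathcal{MS}$ from the preceding lemma, strictly to the right of $u$; the order claim on $v'<v<u<w^{*}$ again yields ``$v'$ sees $w^{*}$.'' The hardest part will be verifying that a suitable $w^{*}$ exists in every subcase where it is needed, and cleanly handling the singleton-$L(u)$ subcase in which $v$ itself is pushed onto $\mathcal{MS}$ and inherits the transferred entry directly.
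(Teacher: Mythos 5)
Your overall structure (induction over events, case analysis on the event type, and the order claim applied to a quadruple $v'<e<u<w$) matches the paper, and your cases 1, 2, and 3 are essentially the paper's proof. The genuine gap is in case 4, step 4(i). For the residual possibility $v'.y>v.y$ you appeal to a dominator $w^{*}$, the rightmost vertex of $\mathcal{MS}$ that sees $v$, but such a vertex need not exist: step 4(i) may empty $\mathcal{MS}$ entirely, and if the list $L(u)$ of the last popped vertex contains more than one element, $v$ itself is not pushed, so at the end of the iteration there may be no vertex of $\mathcal{MS}$ that sees $v$ at all. You acknowledge this (``verifying that a suitable $w^{*}$ exists'') but do not close it, and it cannot be closed by exhibiting $w^{*}$ in general. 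The way out---and what the paper does---is to observe that the residual case is vacuous, so the \emph{second} disjunct of the lemma applies: a popped $u$ has $u.y<v.y<v'.y$, so $u$ lies strictly below and strictly to the right of the left convex vertex $v'$; since $v'$ has a horizontal terrain edge of positive length at height $v'.y$ immediately to its right, the segment $\overline{uv'}$ lies strictly below that edge just to the right of $v'$, so $u$ cannot see $v'$ at all (this is the same geometry that underlies Lemma~\ref{lemmaConvex}). Combined with your blocking argument at $x=v.x$ for targets with $v'.y\le v.y$, every vertex popped in step 4(i) sees no left convex vertex left of the sweep line, and no dominator is needed; this is exactly the paper's (terse) assertion in case 4.

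A secondary soft spot is case 5: you propose to ``invoke the order claim,'' but Lemma~\ref{lemmaOrder} requires all four points to lie on the terrain, and the intersection event $v$ lies strictly above it, so $v$ cannot serve as the second point of the quadruple. The paper instead argues directly: if a deleted $u$ sees some $q$ with $q<v$, then the segment $\overline{qu}$ (nowhere below $T$) crosses the segment $\overline{vp}$ (nowhere below $T$, where $p$ is the rightmost vertex in $\mathcal{MS}$ whose shadow ray passes through $v$) in its interior, and hence $\overline{qp}$ is nowhere below $T$, i.e., $p$ sees $q$. Your mention of the ``visibility segment'' supplied by the intersection point suggests you have this picture in mind, but as written the step is not an application of Lemma~\ref{lemmaOrder}; it needs the crossing-segments argument (the argument that \emph{proves} the order claim, rather than the claim itself).
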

\begin{proof}
Consider five types of event $v$:

(1) $v$ is a left convex vertex: We prove that all deleted vertices are dominated by the vertex whose associated list contains $v$ at the end of current iteration. Let this vertex be $p$. As $\mathcal{MS}$ is in order, any deleted vertex $u$ is to the left of $p$ and to the right of $v$. Suppose $u$ can see $q$ to the left of the sweep line. So we have $q<v<u<p$, $q$ can see $u$, and $v$ can see $p$. According to Lemma~\ref{lemmaOrder}, $q$ can see $p$; hence, $p$ dominates $u$.

(2) $v$ is a right convex vertex: No vertex is deleted.

(3) $v$ is a left reflex vertex: Let $p$ be the rightmost vertex in $\mathcal{MS}$ that can see $v$. We prove that all deleted vertices are dominated by $p$. Any deleted vertex $u$ must see $v$. Hence, $v<u<p$. Using a proof similar to case (1) and Lemma~\ref{lemmaOrder}, we conclude that $p$ dominates $u$.

(4) $v$ is a right reflex vertex: All vertices deleted in the first step are lower than $v$, so they cannot see any left convex vertex to the left of the sweep line. Similar to case (3), all vertices deleted in the second step are dominated by the rightmost one in $\mathcal{MS}$ that can see $v$.

(5) $v$ is an intersection: We prove that all deleted vertices are dominated by the rightmost vertex $p$ in $\mathcal{MS}$ whose shadow ray crosses $v$. As $\mathcal{MS}$ is in order, any deleted vertex $u$ is lower than and to the left of $p$. Suppose $u$ can see $q$ to the left of the sweep line, i.e., $q<v$. Segment $\overline{qu}$ is nowhere below the terrain $T$ and intersects segment $\overline{vp}$ in its interior. So $\overline{qp}$ is nowhere below the terrain $T$, which means $p$ can see $q$, see Figure~\ref{figintersectionlemma}.
\end{proof}
\begin{figure}[h]
\centering
\includegraphics[width=0.7\linewidth]{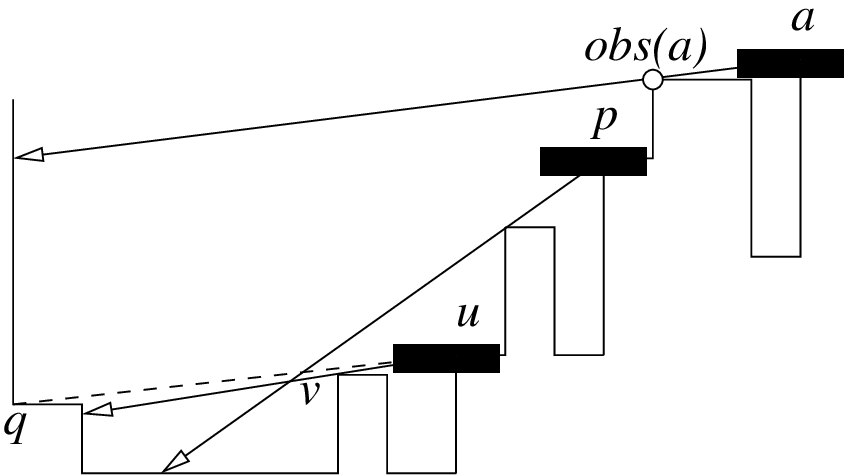}
\caption{intersection $v$: $u$ is dominated by $p$.}
\label{figintersectionlemma}
\end{figure}

Applying Lemma~\ref{correctlemma}, we can get the following corollary.
\begin{cor}
For any $v \in V_{lc}(T)$, if $v$ is seen by some vertex in $\mathcal{MS}$ before the sweep line reaches $v$, then $v$ is seen by some vertex in $\mathcal{MS}$ when the sweep line arrives at $v$.
\label{cor:see}
\end{cor}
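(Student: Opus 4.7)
The plan is to prove the corollary by induction on the events processed between the moment at which some vertex in $\mathcal{MS}$ first sees $v$ and the moment the sweep line arrives at $v$. The inductive invariant I would carry is precisely the statement of the corollary for $v$: at the end of processing every event $w$ with $w > v$ (recall the sweep is right-to-left, so $w>v$ means $w$ is handled earlier), there is some vertex in $\mathcal{MS}$ that sees $v$. The base case is the event where the hypothesis first becomes true, at which point there is by assumption some $g \in \mathcal{MS}$ seeing $v$. Note that $v$ is a left convex vertex and $v < w$ for every event $w$ we still have to process before reaching $v$, so $v$ always lies strictly to the left of the current sweep line.

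For the inductive step, assume that just before processing some event $w > v$ there is a vertex $g \in \mathcal{MS}$ that sees $v$. I consider two cases. If $g$ is still in $\mathcal{MS}$ after the handling of $w$, then the invariant is preserved: visibility between two fixed terrain vertices does not depend on sweep-line progress, so $g$ still sees $v$. Otherwise $g$ is deleted while handling $w$. Since $v$ is a left convex vertex lying to the left of the sweep line at $w$ and is visible from $g$, Lemma~\ref{correctlemma} rules out the possibility that $g$ sees no left convex vertex to the left of the sweep line. Therefore the other alternative of that lemma must hold: $g$ is dominated by some vertex $g' \in \mathcal{MS}$ at the end of processing $w$. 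By the definition of domination, $g'$ also sees $v$, and the invariant is preserved.

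Finally I would conclude by applying the invariant to the last event processed strictly before the sweep line reaches $v$: at that moment $\mathcal{MS}$ contains a vertex that sees $v$, which is exactly the statement of the corollary. The main obstacle is the case analysis for the deletion case, but Lemma~\ref{correctlemma} is tailored precisely for this, so the argument reduces to checking that the ruled-out alternative of the lemma cannot apply when the witness $v$ lies on the unswept side; this is immediate from $v<w$.
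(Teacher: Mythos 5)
Your proof is correct and follows exactly the route the paper intends: the paper states the corollary as an immediate consequence of Lemma~\ref{correctlemma}, and your event-by-event induction (witness either survives in $\mathcal{MS}$, or is deleted and then—since it sees the left convex vertex $v$ to the left of the sweep line, excluding the lemma's second alternative—is dominated by a vertex still in $\mathcal{MS}$, which then also sees $v$) is precisely the application the paper leaves implicit.
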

Let our solution be set $G$, so we have for each $g\in{G}$, $L(g)$ is not empty. 
\begin{lemma}
For each left convex vertex $v$, there is a unique $g \in G$ such that $v \in L(g)$. 
\label{lem:outputsize}
\end{lemma}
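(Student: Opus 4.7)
The plan is to pin down exactly which steps of Algorithm~\ref{algosweep} touch the associated lists $L(\cdot)$, and then verify that a left convex vertex $v$ is inserted into a list exactly once and is moved at most from one list to another, never dropped.

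First I would survey every event handler looking for operations that mutate some $L(u)$. Scanning the five cases, only two do: case 1(ii), which inserts the current left convex vertex $v$ into either $L(\mathrm{Top}(\mathcal{MS}))$ or $L(R(v))$, and case 4(i), which, when the popped original top $u$ satisfies $|L(u)|=1$, moves that single element $p$ from $L(u)$ into $L(v')$ (where $v'$ is the right reflex event). The handlers for right convex vertices, left reflex vertices, and intersections, and the remaining steps 4(ii) and 4(iii), only pop from $\mathcal{MS}$ or alter $\mathcal{H}$; they never read or write $L(\cdot)$. So any change to the multiset of occurrences of $v$ across the lists must come from one of these two operations.

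Next I would prove existence and uniqueness together by induction on events processed after $v$'s own event. Base case: when the sweep line reaches $v$, case 1(ii) places $v$ into exactly one list (in both sub-branches of that step, unconditionally). So right after $v$ is handled, $v$ lies in exactly one $L(g)$. Inductive step: assume before event $w$ that $v$ is in exactly one list $L(u)$. If $w$ is anything other than a right reflex event invoking the ``singleton'' branch of 4(i) on $u$, the contents of $L(u)$ (and of all other lists) are unchanged, so the invariant persists. In the remaining sub-case, $u$ is popped and its only element $p$ is moved to $L(w)$; either $p=v$, in which case $v$ migrates from $L(u)$ to $L(w)$ and still lies in exactly one list, or $p\neq v$, which is impossible since $v\in L(u)$ and $|L(u)|=1$ force $p=v$. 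Either way, exactly one list contains $v$ after the event.

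Finally, at termination $v$ is in a unique $L(g)$, so $L(g)\neq\emptyset$ and hence $g\in G=\{g:L(g)\neq\emptyset\}$, which gives the desired unique $g\in G$ with $v\in L(g)$. The only subtlety is being careful that the pops from $\mathcal{MS}$ occurring throughout cases 1, 3, 4, and 5 never implicitly discard the popped vertex's list; I would explicitly note that $G$ is defined by non-emptiness of $L(\cdot)$, not by membership in $\mathcal{MS}$, so a vertex popped from $\mathcal{MS}$ with a non-empty list remains in $G$ and retains ownership of $v$. That bookkeeping observation is the only real hazard in the argument; the rest is a direct case check against the algorithm.
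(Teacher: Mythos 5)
Your proof is correct and takes essentially the same approach as the paper's: both identify that the only operations touching the lists are the insertion in case 1(ii) (which places $v$ in exactly one list when the sweep reaches it) and the singleton move in case 4(i) (which transfers $v$ intact from one list to another), and both conclude uniqueness at termination from this. Your explicit induction is in fact slightly more robust than the paper's wording, which asserts that after one migration $v$ ``will never change''; your invariant needs no such claim and tolerates repeated migrations, but this is a refinement of the same argument rather than a different route.
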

\begin{proof}
Before the sweep reaches $v$, $v$ is not added to the list of any vertex. When the sweep line arrives at $v$, $v$ is added to some list $L(u)$. After that, the only operation that may change the list containing $v$ is the first step in handling right reflex vertex. If $L(u)$ contains some vertex other than $v$, $v$ will be in $L(u)$ till the end of the algorithm. If $L(u)$ contains only $v$, when $u$ is popped in the first step of handling right reflex vertex $w$, $v$ will be deleted from $L(u)$ and added to $L(w)$, then it will never change. In either case, when the algorithm terminates, there is a unique $g \in G$ such that $v \in L(g)$.
\end{proof}

Optimality of $G$ will be based on the following set definition also appears in \cite{Durocher}.
Let $F=\lbrace{v\vert{v}}$ is the first left convex vertex in $L(g)$, for each ${{g\in{G}}}\rbrace$. Observe that the sizes of the sets of $F$ and $G$ are the same, i.e., ${\vert{F}\vert}=\vert{G}\vert$. Moreover, for any vertex $v\in F$, we know that when the sweep line arrives at $v$, there is no vertex in $\mathcal{MS}$ that can see $v$.
\begin{lemma}
For any two vertices $u,v\in{F}$, there are no reflex vertices that can see both of them.
\end{lemma}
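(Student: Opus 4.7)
The plan is to suppose for contradiction that, with $u < v$ both in $F$, some reflex vertex $p$ sees both $u$ and $v$, and to derive a contradiction using Corollary \ref{cor:see}.

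First I would apply Lemma \ref{lemmaConvex} twice to conclude that $p$, being reflex and seeing the left convex vertices $u$ and $v$, must satisfy $p > v > u$. Since $R(v)$ is by definition the rightmost reflex vertex seeing $v$ and $p$ is itself a reflex vertex seeing $v$, we also have $R(v) \geq p$. If $R(v) = p$, then $R(v)$ sees $u$ directly; otherwise $u < v < p < R(v)$ is a four-point chain in which $u$ sees $p$ and $v$ sees $R(v)$ (by definition of $R(v)$), and the order claim, Lemma \ref{lemmaOrder}, yields that $u$ sees $R(v)$.

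Next I would examine the state of $\mathcal{MS}$ just after the event at $v$. Since $v \in F$, the text has already argued that when the sweep line arrives at $v$ no vertex of $\mathcal{MS}$ sees $v$; in particular $R(v) \notin \mathcal{MS}$ at that moment. Following the left convex handling rule, the algorithm then finds no Top that sees $v$ and pushes $R(v)$ onto $\mathcal{MS}$. So immediately after processing $v$, we have $R(v) \in \mathcal{MS}$, $R(v)$ sees $u$, and the sweep line stands at $v$, strictly to the right of $u$.

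This is precisely the hypothesis of Corollary \ref{cor:see} applied to the left convex vertex $u$: before the sweep line reaches $u$, there is a vertex of $\mathcal{MS}$ (namely $R(v)$) that sees $u$. The corollary then forces some vertex of $\mathcal{MS}$ to see $u$ when the sweep line actually arrives at $u$, which directly contradicts the defining property of $u \in F$. The main place that needs care is the order-claim step that promotes the hypothesis ``$p$ sees both $u$ and $v$'' into ``$R(v)$ sees $u$'', with the degenerate case $R(v) = p$ handled separately; after that, the pushing of $R(v)$ at event $v$ and a single application of Corollary \ref{cor:see} close the argument.
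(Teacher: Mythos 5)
Your overall skeleton is the paper's: assume a common watcher $p$ of $u<v$, get a vertex that sees $u$ into $\mathcal{MS}$ at the event $v$, and invoke Corollary~\ref{cor:see} to contradict $u\in F$. Your order-claim step is also exactly the paper's argument for one of its two cases. The gap is at the start: Lemma~\ref{lemmaConvex} gives only $p\nless v$, i.e.\ $p.x\geq v.x$, not $p>v$. The equality case $p.x=v.x$ means $p=U(v)$, the right reflex vertex sharing the vertical edge with $v$; and this case is unavoidable, since by Observation~\ref{limitedvisibility} any right reflex vertex seeing both $u$ and $v$ must in fact be $U(v)$ (it already sees the left convex vertex directly below it). When $p=U(v)\neq R(v)$ your four-point chain $u<v<p<R(v)$ degenerates ($v$ and $p$ have the same $x$-coordinate), Lemma~\ref{lemmaOrder} does not apply, and the conclusion you want is genuinely false, not just unproved. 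Counterexample: take the terrain with consecutive vertices $(-5,30)$, $u=(-5,12)$, $(-4,12)$, $(-4,10)$, $U(v)=(0,10)$, $v=(0,0)$, $(3,0)$, $(3,6)$, $(20,6)$, $(20,30)$, $\ldots$. Here $U(v)$ sees both $u$ and $v$, and $R(v)=(3,6)$ (the reflex vertex $(20,30)$ is blocked, since its segment to $v$ has height $4.5<6$ at $x=3$), but $R(v)$ does not see $u$: the segment $\overline{uR(v)}$ has height $8.25$ at $x=0$, strictly below the plateau at height $10$ immediately to the left of $U(v)$.

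This is precisely why the paper splits into two cases. When the watcher is a left reflex vertex it lies strictly to the right of $v$, and the order-claim argument (yours) works. When the watcher is $U(v)$, the paper's argument is algorithmic, not geometric: at the event $U(v)$, which is processed immediately after $v$, the vertex $R(v)$ is Top($\mathcal{MS}$) and $L(R(v))=\lbrace v\rbrace$ because $v\in F$; if $R(v)$ is higher than $U(v)$ then $R(v)$ does see $u$, and otherwise step 4(i) of the sweep pops $R(v)$ (it is lower than $U(v)$) and pushes $U(v)$ itself onto $\mathcal{MS}$, and $U(v)$ sees $u$ by hypothesis. Either way some vertex of $\mathcal{MS}$ sees $u$ before the sweep reaches $u$, and only then does Corollary~\ref{cor:see} close the argument as you intend. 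So your proof is missing an essential case, and that case cannot be repaired by pure geometry; it needs the right-reflex-vertex handling of the algorithm together with the fact that $L(R(v))$ is a singleton.
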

\begin{proof} 
To prove by contradiction, suppose $w$ is a reflex vertex that can see both $u, v \in F$. Without loss of generality, let $u<v$, so we visit $v$ first. Then we prove that there exists some vertex in $\mathcal{MS}$ that can see $u$ before the sweep line reaches $u$.

case 1: $w$ is a left reflex vertex: We have $u<v<w$. By definition of $R(v)$, $w\leqslant{R(v)}$. Using Lemma~\ref{lemmaOrder}, $R(v)$ can see $u$. When we visit $v$, we add $R(v)$ to $\mathcal{MS}$.

case 2: $w$ is a right reflex vertex: $w$ should be $U(v)$. It is easy to see that $R(v)$ is Top($\mathcal{MS}$) when the sweep line arrives at $w$. If $R(v)$ is higher than $U(v)$, $\overline{R(v)u}$ is nowhere below the terrain $T$, $R(v)$ can see $u$. Otherwise, $R(v)$ is popped in the first step as it is lower than $w$, and $w$ is pushed into $\mathcal{MS}$ as $L(R(v))$ contains only $v$.

In either case there exists some vertex in $\mathcal{MS}$ that can see $u$ before the sweep line reaches $u$. By Corollary~\ref{cor:see}, there exists some vertex in $\mathcal{MS}$ when the sweep line arrives at $u$, which contradicts that $u \in F$.
\end{proof}

Lemma \ref{lem:outputsize} implies that the optimal solution of OTGP$(V_r(T), V_{lc}(T))$ has at least ${\vert{F}\vert}$ reflex vertices. Our solution can see all left convex vertices and has size ${\vert{G}\vert}=\vert{F}\vert$. So we have the following result.
\begin{lemma}
Algorithm~\ref{algosweep} computes the optimal solution for OTGP$(V_r(T), V_{lc}(T))$.
\end{lemma}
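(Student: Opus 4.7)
The plan is to sandwich $|G|$ between the size of a feasible solution and the size of the optimum. First I would verify that $G$ is feasible for OTGP$(V_r(T), V_{lc}(T))$. Only reflex vertices are ever pushed onto $\mathcal{MS}$ (inspection of each of the five event handlers), so $G \subseteq V_r(T)$; by Lemma~\ref{lem:outputsize} every $v \in V_{lc}(T)$ lies in exactly one list $L(g)$ with $g \in G$; and the handlers only place (or migrate) $v$ into $L(g)$ when $g$ sees $v$, so $G$ guards all of $V_{lc}(T)$.

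Second, I would obtain the matching lower bound on the optimum via the set $F$. Let $G^* \subseteq V_r(T)$ be any optimal solution for OTGP$(V_r(T), V_{lc}(T))$. Since $F \subseteq V_{lc}(T)$, every $v \in F$ is seen by at least one $g_v \in G^*$. The lemma immediately preceding this one asserts that no reflex vertex can see two distinct elements of $F$, so the assignment $v \mapsto g_v$ is an injection from $F$ into $G^*$, giving $|G^*| \geq |F|$. Combined with $|G| = |F|$ (noted in the paragraph introducing $F$, and itself a consequence of Lemma~\ref{lem:outputsize}), this gives $|G| \leq |G^*|$, and feasibility of $G$ then forces $|G| = |G^*|$, i.e., $G$ is optimal.

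The step I anticipate as the main obstacle is the visibility invariant used in feasibility, namely that every $g \in G$ sees every vertex in $L(g)$. In the left convex handler this is immediate from the code, but the right reflex handler contains a migration step that moves a lone $p$ from $L(u)$ to $L(v)$ when the right reflex vertex $v$ is higher than $u$, and there one has to verify that $v$ itself sees $p$. This follows from the local geometry at a right reflex vertex (the vertical edge descending immediately to its right, so the terrain between $v$ and $u$ stays below the level of $v$) together with the prior visibility of $p$ from $u$, but it is the one place where an already-placed vertex is reassigned and so deserves explicit invariant-style inspection rather than a one-line argument.
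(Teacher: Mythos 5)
Your skeleton is the same as the paper's: the lemma is obtained there from exactly the three ingredients you list --- feasibility of $G$, the equality $\vert G\vert=\vert F\vert$ via Lemma~\ref{lem:outputsize}, and the lower bound $\vert G^{*}\vert\geqslant\vert F\vert$ coming from the lemma that no reflex vertex can see two members of $F$. Your injection $v\mapsto g_v$ is the right way to make that last step explicit (the paper's pointer to Lemma~\ref{lem:outputsize} at that spot really concerns the $F$-lemma), so structurally the proposal is correct and matches the paper.

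The one flaw is in the justification you sketch for the step you yourself flag as the crux: that $v$ sees $p$ in the migration of handler 4(i). From ``$u$ sees $p$, $u$ is lower than $v$, and the terrain between $v$ and $u$ stays below the level of $v$'' one cannot conclude that $v$ sees $p$: a wall lower than $v$ but higher than the segment $\overline{vp}$ at its abscissa blocks $v$ from $p$ while leaving $u$'s view of $p$ intact, and nothing in the handler's guard condition (``$u$ lower than $v$'' and $\vert L(u)\vert=1$) excludes this by itself. The argument that actually closes the gap is combinatorial rather than metric: in the right-to-left sweep, the event processed immediately before a right reflex vertex $v$ is the left convex vertex $c$ at the bottom of $v$'s vertical edge (they share an $x$-coordinate, so no other vertex or, barring degeneracies, intersection event separates them), and handler 1 guarantees that after $c$ is processed, $c$ lies in the list of Top($\mathcal{MS}$). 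Hence whenever the migration condition $\vert L(u)\vert=1$ holds at $v$, the lone element is necessarily $p=c$, and $v=U(c)$ sees it trivially --- this is what Observation~\ref{limitedvisibility} is pointing at. With that substitution your feasibility argument, and therefore the whole proof, goes through.
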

Symmetrically we can compute the optimal solution for OTGP$(V_r(T), V_{rc}(T))$, leading to a 2-approximation algorithm for the OTGP$(V(T), V(T))$.

\subsection{Running Time}
Let $k$ be the size of $\mathcal{MS}$, and $t$ be the number of vertices with non-empty lists outside $\mathcal{MS}$. It is easy to see that the summation of $k$ and $t$ never decreases and eventually it will be $m$, where $m$ is the output size. As the number of intersections of shadow rays of adjacent vertices in $\mathcal{MS}$ is less than $k$, the size of $\mathcal{H}$ is $O(m)$. Note that $t$ is increased by at least 1 when handling each intersection. Thus there are $O(m)$ intersection events. Then we analyse the running time associated with each data structure.

(1) $\mathcal{UHS}$. Maintenance of upper hull takes $O(n)$ total time.

(2) $\mathcal{MS}$. The running time is proportional to the cost of stack insertions and deletions. Each deleted vertex when handling right reflex vertex $v$ is lower than $v$ and all the other deleted vertices are dominated by some vertex in $\mathcal{MS}$ by Lemma~\ref{correctlemma}. So the deleted vertices cannot be inserted again in future iterations. Each operation takes constant time. The total running time is $O(n)$.

(3) $\mathcal{H}$ and $\mathcal{EQ}$. There are four cases. (i) Get the next event. If the next event is from $\mathcal{EQ}_T$, it takes constant time and there are $n$ such events, so it takes $O(n)$ time in total; if next event is from $\mathcal{EQ}_I$, it takes $O(\log m)$ time and there are $O(m)$ intersections, so it takes $O(m \log m)$ in total. (ii) Insert vertices into $\mathcal{MS}$. There are $O(n)$ insertions and constant number of new intersections with each insertion, so the time complexity is $O(n \log m)$ in total. (iii) Delete vertices from $\mathcal{MS}$. Similar to case (ii). (iv) Update obstacles. We need to update at most one obstacle at any left reflex vertex, along with two deletions and one insertion with $\mathcal{H}$. As there are $O(n)$ left reflex vertices, the total running time is $O(n \log m)$.

Overall, the running time is $O(n \log m)$.


{
\small
\bibliographystyle{abbrv}

}

\newpage
\section*{Appendix: Removing Input Restriction}
We show how to remove the restriction when both sides are horizontal, and it is easier if only one side is horizontal.

First we extend the terrain by adding two edges. Let the leftmost vertex be $u$, and the rightmost vertex be $v$. We add two vertical edges $\overline{uu'}$ and $\overline{vv'}$ with infinitesimal length to both of them. The newly added vertex is the upper endpoint of the new edge. Let the extended terrain be $T'$. We have $V(T')=V(T)\cup{\lbrace{u', v'}\rbrace}$.
\begin{figure}[h]
	\centering
	\begin{subfigure}[b]{0.45\linewidth}
		\includegraphics[width=\linewidth]{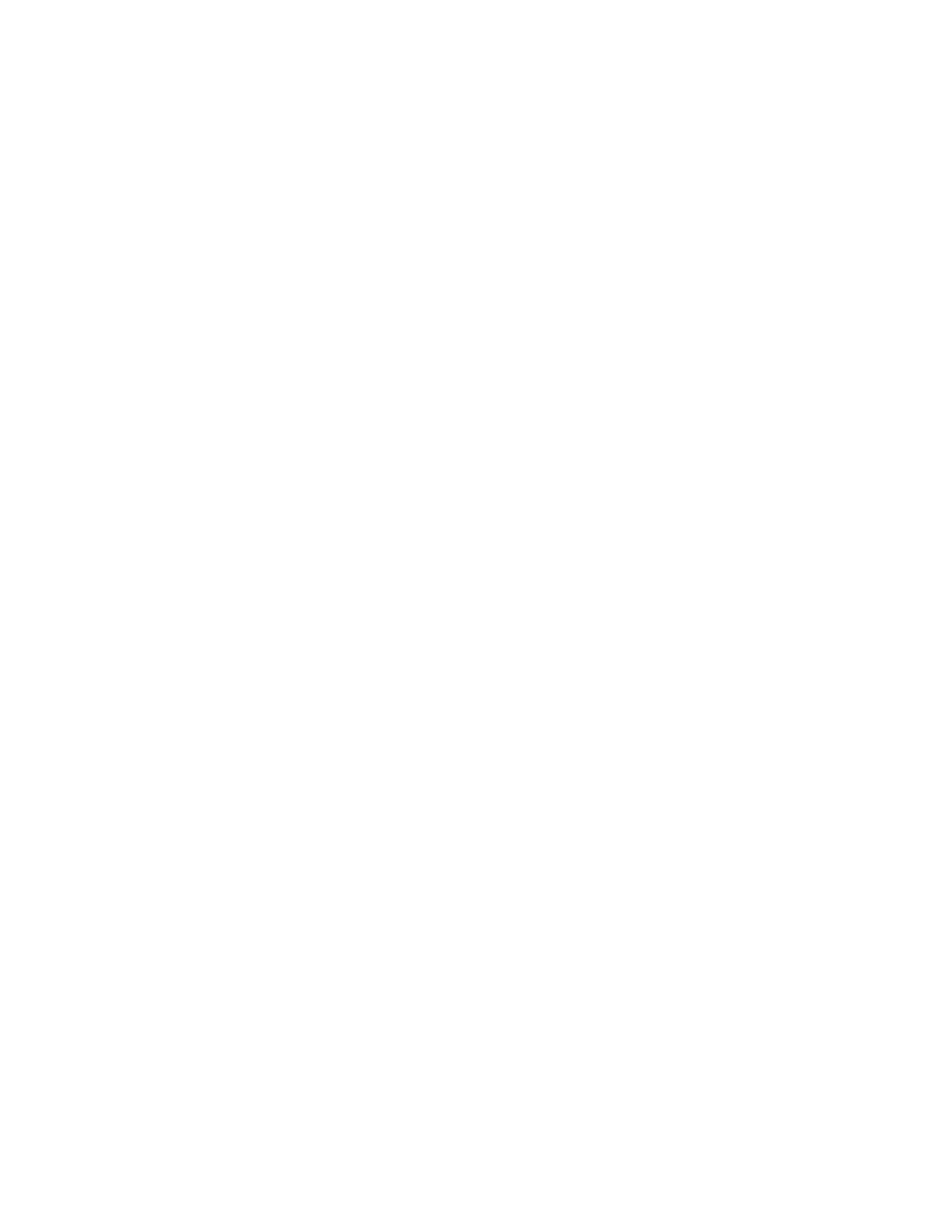}
		\caption{}
		\label{figfirsthorzontala}
	\end{subfigure}
	\begin{subfigure}[b]{0.45\linewidth}
		\includegraphics[width=\linewidth]{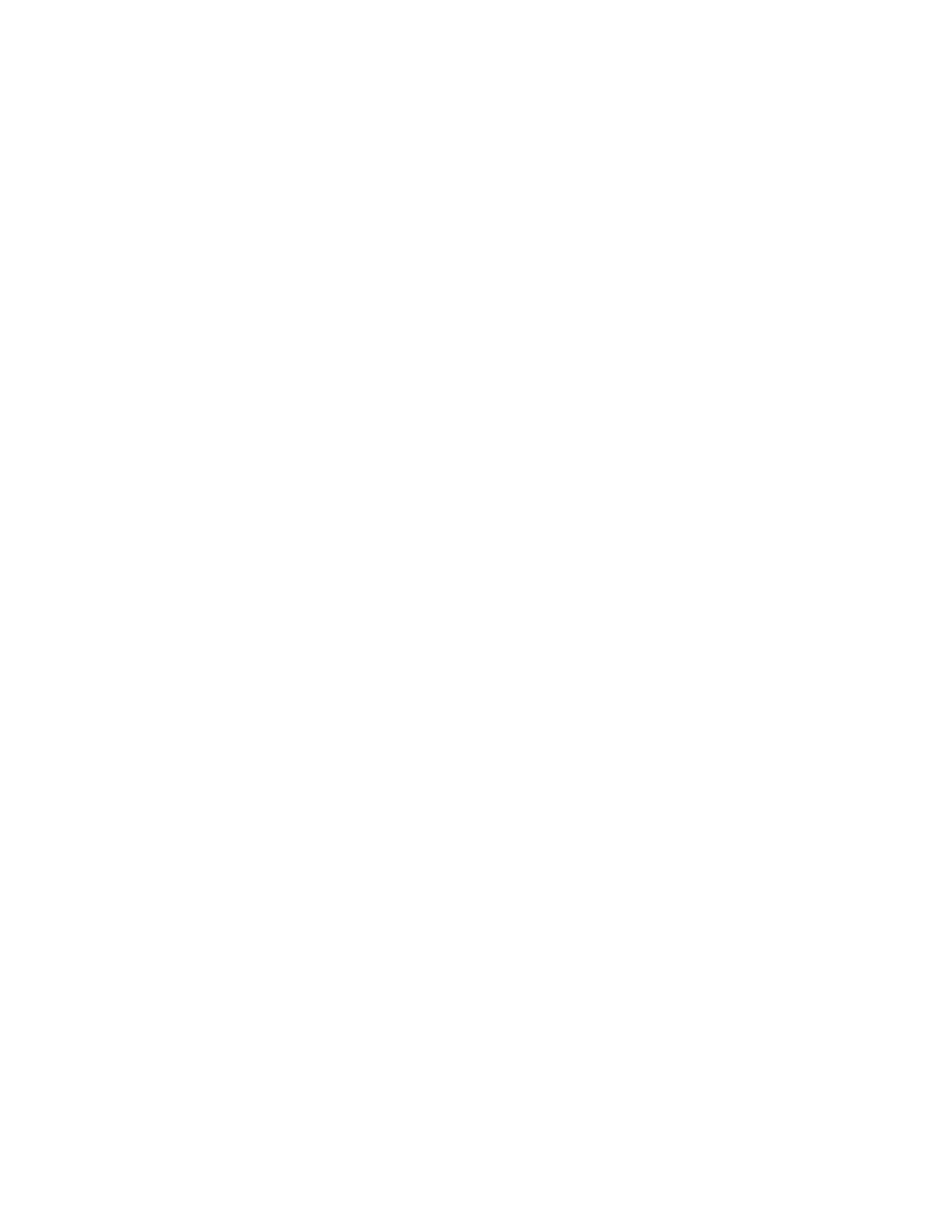}
		\caption{}
		\label{figfirsthorzontalb}
	\end{subfigure}
	\caption{(a)next vertex to $v$ is a convex vertex. (b)next vertex to $v$ is a reflex vertex. If $v'$ is a guard, we can replace $v'$ with $w$.}
\end{figure}
\begin{lemma}
The cardinality of the optimal solution for OTGP$(V(T), V(T))$ is the same as the cardinality of the optimal solution for OTGP$(V(T'), V(T'))$, and we can easily transform from the solution of the latter to the solution of the former.
\end{lemma}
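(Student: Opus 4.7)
The plan is to establish the cardinality equality by proving both $\text{opt}(T') \le \text{opt}(T)$ and $\text{opt}(T) \le \text{opt}(T')$; the second inequality will be proved by exhibiting the explicit transformation the lemma promises.

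The easy direction $\text{opt}(T') \le \text{opt}(T)$ shows that any guarding set $G$ for OTGP$(V(T),V(T))$ already guards $V(T') = V(T)\cup\{u',v'\}$ in $T'$. Since $V(T)$ is already guarded, only $u'$ and $v'$ need attention. If $g \in G$ sees $u$ in $T$, then $g$ lies weakly to the right of the leftmost vertex $u$, so $\overline{g u'}$ lies infinitesimally above $\overline{gu}$, while the new vertical edge $\overline{uu'}$ sits at $x=u_x$, weakly to the left of $g$; hence $\overline{gu'}$ is never strictly below $T'$. The symmetric argument covers $v'$.

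For the reverse direction, start with an optimal $G'$ for OTGP$(V(T'),V(T'))$. Applying Lemmas~\ref{lemmaReflexSub} and~\ref{lemmaReflexEnough} inside $T'$, I may assume $G' \subseteq V_r(T') = V_r(T)\cup\{u',v'\}$. The transformation replaces $u'$ (if present) by $u$'s horizontal neighbor in $T$, and $v'$ (if present) by $v$'s horizontal neighbor $w$ in $T$, as shown in the figures, producing $G \subseteq V(T)$ with $|G| \le |G'|$. By Lemma~\ref{lemmaReflexEnough} it suffices to verify that $G$ guards every convex vertex of $T'$; because the infinitesimal edges do not affect visibility among vertices of $V(T)$, $G$ then also guards $V(T)$ in $T$, yielding $\text{opt}(T) \le \text{opt}(T')$.

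The main obstacle is the substitution claim: $w$ sees every convex vertex of $T'$ that $v'$ sees. The cases $p \in \{v, w\}$ are immediate, so the nontrivial case is $p \in V_c(T')$ strictly to the left of $w$. Let $h$ denote the common height of $\overline{wv}$. On one hand, for $v$ to see $p$ the segment $\overline{vp}$ must lie at height at least $h$ at $x=w_x$ (since the terrain at $x=w_x$ contains $w$ at height $h$), and by linear interpolation this forces $p_y \ge h$. On the other hand, Lemma~\ref{lemmaConvex} forces $p \in V_{lc}$, so $p$'s right edge is a horizontal extension at height $p_y$; if $p_y > h$, this extension would strictly block the line of sight from $v$ at height $h$ to $p$, contradicting $v$ seeing $p$. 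Combining, $p_y = h$, and then $\overline{vp}$ and $\overline{wp}$ are both horizontal segments at height $h$ over nested $x$-intervals, so any obstruction for $\overline{wp}$ would also obstruct $\overline{vp}$; hence $w$ sees $p$ whenever $v$ (equivalently $v'$) does. The analogous analysis handles the $u'$ substitution, completing the proof.
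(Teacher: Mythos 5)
Your proof is correct, and while it follows the paper's overall skeleton (two inequalities, with the hard direction handled by replacing $u'$ and $v'$ by vertices of the original terrain), the replacement step is justified by a genuinely different argument. The paper splits into cases on the type of $v$'s horizontal neighbor: when the neighbor $p$ is convex it replaces $v'$ by $U(p)$, relying on a complete characterization of $v'$'s visibility ($v'$ sees only $p$, $U(p)$, $v$), and when the neighbor $w$ is reflex it replaces $v'$ by $w$ but leaves the verification to a figure and the word ``similarly''; it also argues that the replacement vertex cannot already lie in $G'$ so that cardinality is preserved exactly. You instead replace uniformly by the horizontal neighbor $w$ in both cases, and make this provable by first reducing the witnesses to convex vertices via Lemma~\ref{lemmaReflexEnough}: any convex vertex strictly left of $w$ that $v'$ sees must be left convex (the mirror of Lemma~\ref{lemmaConvex}) and must lie exactly at the height $h$ of the edge $\overline{wv}$ --- at least $h$ to clear $w$, at most $h$ or else its own right horizontal edge blocks the sight line --- hence it is seen by $w$ along a horizontal segment. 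This buys two things: it supplies the proof the paper omits in the reflex-neighbor case (which is the case where $v'$ can see many vertices, so the omission is the real content), and it correctly observes that an inequality $|G|\le|G'|$ suffices, so the paper's ``replacement not in $G'$'' argument is unnecessary. Two minor remarks: your invocation of Lemma~\ref{lemmaReflexSub} to make $G'\subseteq V_r(T')$ is never actually used (only the witness reduction matters) and can be dropped; and your argument, like the paper's, treats the infinitesimal edge length informally (e.g., conflating $v$ and $v'$ in the interpolation step, and ignoring witnesses at $x$-coordinate exactly $w_x$, which are disposed of by the same mirrored Lemma~\ref{lemmaConvex}), but this is at the paper's own level of rigor.
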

\begin{proof}
Let $G$ be an optimal solution for OTGP$(V(T), V(T))$, $G'$ be an optimal solution for OTGP$(V(T'), V(T'))$. Suppose $g \in G$ can see $u$, $g$ can also see $u'$, so $u'$ is seen by $G$. Similarly, $v'$ is also seen by $G$. $G \subseteq {V(T)} \subset V(T')$, we have $G$ is a solution for OTGP$(V(T'), V(T'))$. $\vert{G'}\vert \leqslant{\vert{G}\vert}$.

If neither of $u'$ and $v'$ is in $G'$, then $G' \subseteq {V(T)}$ and $G'$ can see $V(T)$, so $G'$ is a solution for OTGP$(V(T), V(T))$. $\vert{G}\vert \leqslant{\vert{G'}\vert}$. If $v'\in G'$, there are two cases depending on the vertex next to $v$. If the vertex next to $v$ is a left convex vertex as in Figure~\ref{figfirsthorzontala}. $v'$ can only see $p$, $U(p)$ and $v$, so we can replace $v'$ with $U(p)$. It is easy to see that $U(p)$ is not in $G'$, otherwise we get a better solution than $G'$ for OTGP$(V(T'), V(T'))$, it is a contradiction. Similarly we can find a replacement for $v'$ when the vertex next to $v$ is a reflex vertex, see Figure~\ref{figfirsthorzontalb}. We can also find a replacement for $u'$ if $u'\in G'$ symmetrically . Suppose we get an optimal solution $G''$ for OTGP$(V(T'), V(T'))$ after replacements. It is easy to see that $G''\subseteq V(T)$ and $G''$ can see $V(T)$, so $G''$ is an solution for OTGP$(V(T), V(T))$. $\vert{G}\vert \leqslant{\vert{G''}\vert} = \vert{G'}\vert$.

Thus we have $\vert{G}\vert = \vert{G'}\vert$, and we also showed how to transform from $G'$ to $G$.

\end{proof}

\end{document}